\begin{document}
\title[Zero-Coupon Treasury Rates with VIX]{Zero-Coupon Treasury Rates and Returns using the Volatility Index}
\author{Jihyun Park, Andrey Sarantsev}
\address{University of Michigan, Ann Arbor Medical Center}
\email{jihyunp@med.umich.edu}
\address{University of Nevada, Reno, Department of Mathematics \& Statistics}
\email{asarantsev@unr.edu}
\date{\today}

\newtheorem{theorem}{Theorem}
\newtheorem{lemma}{Lemma}
\theoremstyle{definition}
\newtheorem{asmp}{Assumption}
\newtheorem{definition}{Definition}

\begin{abstract}
We study a multivariate autoregressive stochastic volatility model for the first 3 principal components (level, slope, curvature) of 10 series of zero-coupon Treasury bond rates with maturities from 1 to 10 years. We fit this model using monthly data from 1990. Unlike classic models with hidden stochastic volatility, here it is observed as VIX: the volatility index for the S\&P 500 stock market index. Surprisingly, this stock index volatility works for Treasury bonds, too. Next, we prove long-term stability and the Law of Large Numbers. We express total returns of zero-coupon bonds using these principal components. We prove the Law of Large Numbers for these returns. All results are done for discrete and continuous time. 
\end{abstract}
\maketitle
\thispagestyle{empty}

\section{Introduction}

Take 10 series of zero-coupon Treasury bond rates with maturities 1, 2, \ldots, 10 years: Monthly end-of-month data. Apply the principal component (PC) analysis to these series to reduce dimension and take the first 3 PC: level, slope, curvature. They explain almost all variance; in fact, level alone explains more than 95\%. Such dimension reduction is the standard procedure in Treasury bond market analysis. Next, take another series VIX, the Volatility Index of Standard \& Poor 500 (an index of large US stocks), measuring daily implied volatility. Denote the average monthly VIX data by $V(t)$. 

We aim to construct a stochastic volatility model for the vector $\mathbf{X}$ of PC: Instead of implied and unobserved volatility as in GARCH or classic stochastic volatility models, we use the observed volatility $V(t)$ for the American stock market, which makes estimation much easier. We analyze the long-term properties of this model in general dimension $d$, although the statistical analysis is carried only for 1, 2, or 3 dimensions. It is remarkable that volatility implied from the stock market works well for bond rates as well. 

First, we describe the model in discrete time. Consider a sequence of independent identically distributed (IID) $\mathbb R^{d+1}$-valued random vectors with mean zero: $(Z_0(t), Z_1(t), \ldots, Z_d(t))$. Denote $\mathbf{Z}(t) = (Z_1(t), \ldots, Z_d(t))$. Then let 
\begin{equation}
\label{eq:AR-SV}
\mathbf{X}(t) = \mathbf{a} + \mathbf{B}\mathbf{X}(t-1) + \mathbf{c}V(t) + \xi(t)\mathbf{Z}(t),
\end{equation}
where $\mathbf{a}, \mathbf{c} \in \mathbb R^d$ are constant vectors, $\mathbf{B}$ is a constant $d\times d$-matrix, and $V(t)$ is modeled as an autoregression of order $1$ on the logarithmic scale:
\begin{equation}
\label{eq:log-AR}
\ln V(t) = \alpha + \beta\ln V(t-1) + Z_0(t),
\end{equation}
the $d\times d$-matrix $\xi$ is diagonal, with $p$ elements $V(t)$ and $d-p$ unit elements: 
\begin{equation}
\label{eq:diag}
\xi(t) = \mathrm{diag}(V(t), \ldots, V(t), 1, \ldots, 1).
\end{equation}
In continuous time, the model~\eqref{eq:AR-SV} and~\eqref{eq:log-AR} can be written as a stochastic differential equation (SDE) with respect to a $d$-dimensional process $\mathbf{X} = (\mathbf{X}(t),\, t \ge 0)$:
\begin{equation}
\label{eq:OU-SV}
\mathrm{d}\mathbf{X}(t) = (\mathbf{a} - \mathbf{B}\mathbf{X}(t) + \mathbf{c}V(t))\,\mathrm{d}t + \xi(t)\,\mathrm{d}\mathbf{W}(t),
\end{equation}
where $V = (V(t),\, t \ge 0)$ is modeled as an Ornstein-Uhlenbeck process on the log scale:
\begin{equation}
\label{eq:log-OU}
\mathrm{d}\ln V(t) = (\alpha - \beta \ln V(t))\,\mathrm{d}t + \mathrm{d}W_0(t),
\end{equation}
the matrix $\xi(t)$ is defined as in~\eqref{eq:diag}, and $(W_0, \mathbf{W}) = (W_0, W_1, \ldots, W_d)$ is a $(d+1)$-dimensional Brownian motion with drift vector zero and covariance matrix $\Sigma$. 

If we had $\mathbf{c} = 0$ and $\xi(t) = I_d$, the $d\times d$-identity matrix, the model~\eqref{eq:AR-SV},~\eqref{eq:log-AR},~\eqref{eq:diag} would be a classic linear vector autoregression of order 1, and the model~\eqref{eq:diag},~\eqref{eq:OU-SV} and~\eqref{eq:log-OU} would be a multivariate Ornstein-Uhlenbeck process. Multiplying the noise terms by $V$ makes volatility non-constant and stochastic. Therefore, we call the discrete-time model~\eqref{eq:AR-SV},~\eqref{eq:log-AR},~\eqref{eq:diag} a {\it multivariate autoregressive stochastic volatility model}, and the continuous-time model~\eqref{eq:diag},~\eqref{eq:OU-SV},~\eqref{eq:log-OU} a {\it multivariate Ornstein-Uhlenbeck process with stochastic volatility}. 

In Section 2, we describe the data and perform statistical analysis. We also provide background and motivation, including historical review and connections with existing research. Why choose zero-coupon Treasury bonds instead of the classic Treasury bonds? Classic bonds pay semiannual coupons; and zero-coupon bonds pay only principal at maturity, which makes it easier to compute total returns. The stochastic volatility is directly observed as VIX data, in contrast with classic stochastic volatility models with hidden volatility. We fit the autoregression model~\eqref{eq:log-AR} for $V(t)$. Then we fit a univariate or multivariate version of the model~\eqref{eq:AR-SV} for the PC. We verify whether including this volatility term improves fit. 

In Section 3, we state and prove long-term stability results: Theorems~\ref{thm:stationary},~\ref{thm:ergodicity},~\ref{thm:LLN}, for the discrete-time model~\eqref{eq:AR-SV},~\eqref{eq:log-AR},~\eqref{eq:diag}, under assumptions that the spectral radius of the matrix $\mathbb{B}$ is less than 1, and $\beta \in (0, 1)$, together with a few other technical assumptions. We state and prove a similar result in Theorem~\ref{thm:cont-time} for the continuous-time diffusion model~\eqref{eq:diag},~\eqref{eq:OU-SV},~\eqref{eq:log-OU}. 

In Section 4, we approximate total bond returns as a simple linear function of the vector of PC. We state and prove the Law of Large Numbers in Theorems~\ref{thm:returns-discrete} and~\ref{thm:returns-continuous}. Finally, we discuss the dependence of term premium (difference between total bond returns and risk-free returns) upon maturity; and test the applicability of the Capital Asset Pricing Model (CAPM) adjusted for the bond market. The first PC (level) is similar to the market exposure factor for stocks in CAPM. Other PC (slope and curvature) are additional factors. 

All data are taken from Federal Reserve Economic Data (FRED) web site. We published this data in a single Excel file together with Python code on \texttt{GitHub} repository \texttt{asarantsev/vix-zeros} with the code generating all data and plots in this article, and some other plots not included in this text.  

\section{Background, Motivation, and Financial Data}

\subsection{Notation and definitions} For $\mathbf{x} = (x_1, \ldots, x_d) \in \mathbb R^d$, the {\it $L^2$-norm} is defined as 
$$
|\mathbf{x}|_2 := \left[|x_1|^2 + \ldots + |x_d|^2\right]^{1/2}.
$$
Next, following \cite{Matrix}, we define two matrix norms for $d\times d$ real-valued matrices. Denote by $\sigma(A)$ the set of eigenvalues of the matrix $A$. The {\it spectral norm} induced by the $L^2$-norm in $\mathbb R^d$, see \cite[Example 5.6.6]{Matrix}:
$$
|\!|A|\!|_S := \sup\limits_{\mathbf{x} \in \mathbb R^d\setminus\{\mathbf{0}\}}\frac{|A\mathbf{x}|_2}{|\mathbf{x}|_2} = \left[\max(\sigma(A^TA))\right]^{1/2}.
$$
The {\it Frobenius norm} \cite[(5.2.7), (5.6.0.2)]{Matrix} is $|\!|A|\!|_F^2 := \mathrm{tr}(A^TA) = \sum_{\lambda \in \sigma(A^TA)}\lambda$. From the definition of these two matrix norms, we compare them: 
\begin{equation}
\label{eq:norms}
|\!|A|\!|_S \le |\!|A|\!|_F \le \sqrt{d}|\!|A|\!|_S.
\end{equation}
The Frobenius norm corresponds to the inner product in the space of $d\times d$-matrices: $(A, B)_F := \mathrm{tr}(A^TB)$. Therefore, we have the Cauchy inequality:
\begin{equation}
\label{eq:Cauchy}
|(A, B)_F| \le |\!|A|\!|_F\cdot |\!|B|\!|_F.
\end{equation}
For symmetric nonnegative definite $A$, we have (with sum over $\lambda \in \sigma(A)$):
\begin{equation}
\label{eq:positive-F}
|\!|A|\!|_F^2 = \sum \lambda^2 \le d\cdot(\sum \lambda)^2 = d\cdot\mathrm{tr}^2(A).
\end{equation}

The following definitions are taken from the monograph \cite{Stability} for discrete time and from the article \cite{MT1993a} for continuous time. We consider a discrete- or continuous-time Markov process $\mathbf{X} = (\mathbf{X}(t), t \ge 0)$ in $\mathbb R^d$, with time $t = 0, 1, 2, \ldots$ for discrete time and $t \in [0, \infty)$ for continuous time. We assume the process is {\it time homogeneous:} $\mathbb P(\mathbf{X}(t) \in A\mid \mathbf{X}(s) = x)$ depends only on $t - s$, not on $t, s$. 

This Markov process has a {\it stationary distribution} ({\it invariant probability measure}) $\pi$ on $\mathbb R^d$ if $X(0) \sim \pi$ implies $X(t) \sim \pi$ for all $t \ge 0$. This process is called {\it ergodic} if it has a unique stationary distribution, and regardless of the initial condition $\mathbf{x} \in \mathbb R^d$, converges to this stationary distribution as time tends to infinity:
$$
\sup\limits_{A \subseteq \mathbb R^d}|\mathbb P(\mathbf{X}(t) \in A\mid \mathbf{X}(0) = x) - \pi(A)| \to 0,\quad t \to \infty.
$$
We say that this Markov process is {\it tight} if, regardless of the initial condition $\mathbf{X}(0)$, for every $\eta > 0$, there exists a compact set $\mathcal K \subseteq \mathbb R^d$ such that for every $t \ge 0$, $\mathbb P(\mathbf{X}(t) \in \mathcal K) \ge 1 - \eta$.

\subsection{Data description} Taken from Federal Reserve Economic Data (FRED) web site, monthly January 1990--August 2024. 

\begin{itemize}
\item Zero-coupon Treasury rates with maturities 1, 2, \ldots, 10 years: end-of-month, FRED series THREEFY1, THREEF2, \ldots, THREEFY10. 
\item Chicago Board of Options Exchange (CBOE) Volatility Index (VIX) for Standard \& Poor (S\&P) 500 stock market index, monthly aveage: FRED series VIXCLS.
\end{itemize}

\subsection{VIX: the volatility index} The two main investment modes are ownership and debt (loans). In the USA, the most liquid publicly owned forms of each are large public stocks (such that comprise the celebrated Standard \& Poor 500 index) and Treasury bonds with various maturities. In this article, we reveal an unexpected connection between these two. It comes as no surprise that these must be dependent upon each other, because they are competing for investors' capital. In this article, we study one aspect of this relationship: Treasury bond market with volatility measured using stock market volatility.

In a simple approximation, the index S\&P 500 moves daily/weekly/monthly (on the log scale) as a random walk, meaning percentage changes are independent and identically distributed. However, in practice, this is not quite true. Instead, there are calm and turbulent periods with lower and higher standard deviation (called {\it volatility} in quantitative finance) of index fluctuations. Denote volatility for month $t$ by $V(t)$. If $X(t)$ are index returns, then we can express $X(t) = V(t)Z(t)$, where $Z(t)$ are IID with mean zero and variance $1$. We measure $V(t)$ using the volatility index (VIX) maintained by the Chicago Board of Options Exchange (CBOE). This index is made for the Standard \& Poor 500: a well-known index of large 500 US companies, used as American stock market benchmark. VIX is computed using options traded on CBOE by solving the Black-Scholes formula for volatility and averaging over many options, \cite{Fear}. 

\subsection{Our use of VIX} In our companion article \cite{VIX-stocks}, we compute and analyze $Z(t)$. The VIX data is available daily from January 1986, if we include its predecessor for the S\&P 100 (a similar index for 100 stocks). Volatility therefore must be modeled as a separate time series/stochastic process, positive but mean-reverting. We can directly observe volatility each day measured by VIX, which is computed from many S\&P 500 index options traded on CBOE. The most natural model for VIX is autoregression of order 1 for the logarithm of VIX:~\eqref{eq:log-AR}. In continuous time, this would correspond to an Ornstein-Uhlenbeck process, see~\eqref{eq:log-OU}. We take the log to make VIX range over the entire real line, not just its positive half. The model fits well: $R^2$ is large, innovations (residuals) pass white noise tests. 

We stress the difference between this model, where volatility is directly observable, and classic stochastic volatility models, where only the return data is observable, and volatility is hidden and must be inferred from the available data. Classic stochastic volatility models were invented in 1980s and studied extensively; see for example articles \cite{Kastner, SV-MC, Stein}, monographs \cite{SV, TaylorBook}, and surveys \cite{VIX, Est-SV, TaylorReview}. These are alternatives for generalized autoregressive conditional heteroscedastic (GARCH) models. Stochastic volatility models (with hidden volatility) have two series of innovations instead of one in GARCH. We also mention the ARMA-GARCH models, where the observable data is modeled by ARMA, and the innovations are modeled using GARCH. The innovations therefore are weak but not strong white noise, see \cite{ARMA-GARCH1, ARMA-GARCH2, ARMA-GARCH3}.

We also mention the multivariate version of the Cox-Ingersoll-Ross (CIR) model, studied in \cite{Duffie-Kan}, for zero-coupon Treasury rates. The classic CIR model is given by the following SDE: 
$$
\mathrm{d}r(t) = (a - br(t))\,\mathrm{d}t + c\sqrt{r(t)}\,\mathrm{d}t,
$$
see \cite{CIR}. It attempts to capture rate movements and stochastic volatility in one equation. Thie article \cite{Duffie-Kan} has an extensive bibliography.

\subsection{The main idea} The main idea of this article is that VIX can be used as stochastic volatility for the Treasury bond market. In our approach, we need only to use classic ordinary least squares (OLS) linear regression. This was discussed in detail in our article \cite{VIX-stocks}. Similarly, the model~\eqref{eq:AR-SV},~\eqref{eq:log-AR} is an alternative to ARMA-GARCH models mentioned in the last subsection. Autoregressive stochastic volatility models have two series of innovations, as opposed to ARMA-GARCH models, which have only one series of innovations. Without observing volatility, autoregressive stochastic volatility models as in~\eqref{eq:AR-SV},~\eqref{eq:log-AR} are hard to estimate. But with volatility observed, estimation is only by a classic linear regression. 

We are surprised that stock market volatility, measured by VIX, works well not just for stock indices, as shown in our companion work \cite{VIX-stocks}, but for Treasury bonds as well. This is presumably because US stock and Treasury bond markets are closely related: They compete for investors. 

\begin{figure}[t]
\centering
\includegraphics[width=8cm]{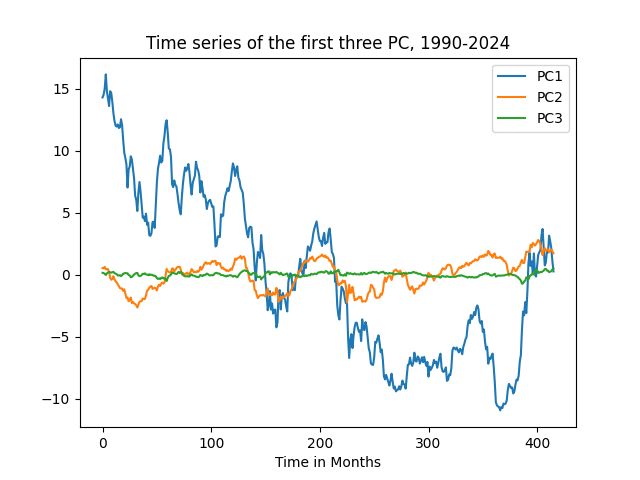}
\includegraphics[width=8cm]{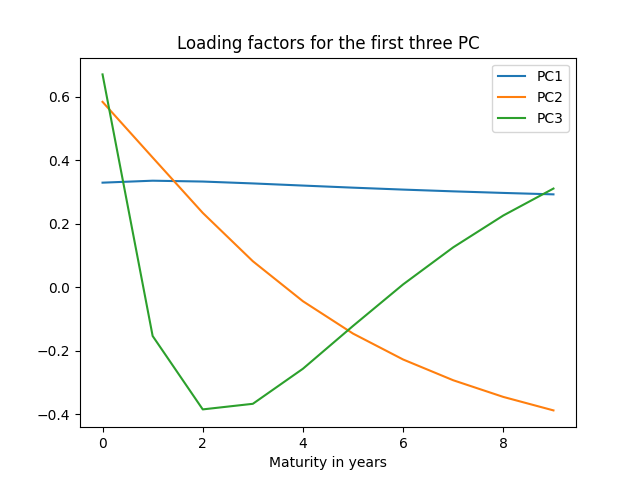}
\caption{Principal Components: level (PC1), slope (PC2), curvature (PC3), and their loading factors. In the left, we see the level gradually decreasing before sharply increasing at the end. This corresponds to the overall dynamics of rates since 1990. The slope and the curvature do not exhibit such clear-cut tendency. In the right, the loading factors are $c_{il}$.}
\label{fig:PCA}
\end{figure}

\subsection{Yield curve} The first step is to model bond rates; only after that we can model total returns. Dependence of a rate upon maturity is called the {\it yield curve}. Long-term rates are usually higher than short-term rates, but not always. In fact, {\it inverted yield curve} (when long rates are lower than short rates) is a reliable predictor of recessions. The {\it long-short spread} (the difference between long and short rates) is called the {\it slope} of the yield curve, and it exhibits mean reversion. 

Another factor, called {\it level}, captures the observation that all rates tend to rise and fall together. This factor also exhibits mean-reversion, since periods of high overall rates (like the 1980s) alternate with periods of very low rates (such as 2010s). Together, level and slope capture almost all variance. 

If one wants to study a tiny bit of remaining variance, one could consider {\it curvature}: The behavior of medium-term rates unexplained by movements of short- and long-term rates. See more discussion in \cite[Chapter 9, p. 276]{Ang} and \cite[Chapter 3]{Wu}.  Another approach is used in the classic book \cite{YieldBook}. There, the three components of the yield curve are modeled as deterministic functions. Statistical analysis is done using the expectation-maximization.

Usually, yield curve is given for classic coupon-paying Treasury bonds. However, here we use zero-coupon bonds which pay only principal at maturity. As discussed in Section 4 of this article, it makes total returns easier to computer for these bonds. In the Federal Reserve research report \cite{Kim-Wright}, a method was developed to compute zero-coupon Treasury rates for any given maturity. This is the data we use in this article. 

The Federal Reserve System controls short-term rates using its {\it open-market operations} (trading very short-term loans in the interbank market, the rate of which is almost perfectly correlated with, say, 3-month rate). Historically, long-term rates (5, 10, 30 years) are left to the open market, although more recently (in 2008-2009 and 2020 crises) the Fed intervened aggressively in this sector of the bond market as well (this program is popularly known as {\it quantitative easing}). See more in \cite[Chapter 9, subsection 2.1]{Ang}.

\subsection{Principal component analysis} Perform the principal components analysis (PCA) on $R_k(t)$ for $k = 1, \ldots, 10$. 
Denote the first three principal components {\it level, slope, curvature} (with their variance explained ratios):
\begin{align*}
P_1(t), & \quad P_2(t), \quad P_3(t),\\
96.63\% & \quad 3.31\% \quad 0.06\%.
\end{align*}
We see the time plots of these PC in the left panel of Figure~\ref{fig:PCA}, and the loading factors plot vs maturity in the right panel of Figure~\ref{fig:PCA}. One can see that well-nigh all variance is explained by the first principal component $P_1$, the {\it level}. The remaining variance is overwhelmingly explained by the second principal component $P_2$, the {\it slope}. The {\it curvature} $P_3$ explains the rest of the variance. We can see in the right panel of Figure~\ref{fig:PCA} that level loading factors are almost constant: This PC corresponds to the overall behavior of interest rates. The slope has decreasing loading factors: This PC shows how long-term rates behave relative to short-term rates. The curvature has a V-shaped pattern for its loading factors: It shows how medium-term rates behave other than following long-term and short-term rates. 

\subsection{Modeling volatility as autoregression} We follow the same technique as in our companion article \cite{VIX-stocks}. Fitting the VIX for monthly average VIX in 1990--2024 as in~\eqref{eq:log-AR}, we have: $\alpha = 0.34$ and $\beta = 0.88$. We test whether $\ln V(t)$ is a random walk using Augmented Dickey-Fuller test. The result is $p = 0.4\%$. Thus we reject the random walk null hypothesis. Judging by the Ljung-Box test for 10 lags, which gives $p = 50\%$ for $W$ and $p = 10\%$ for $|W|$, the innovations $W$ are IID. But skewness and kurtosis are $1.8$ and $10.7$, which makes it clear the data is not Gaussian. This confirms the results of our companion article. However, there are slight differences between this research and that companion article. Here, we fit a slightly smaller data series, starting from January 1990, not January 1986. We use only the VIX based on Standard \& Poor 500, not the version based on Standard \& Poor 100. 

\subsection{Scalar autoregression for each principal component} Fit a simple autoregression AR(1) for the first three principal components (level, slope, curvature):
\begin{equation}
\label{eq:AR}
P_k(t) = a_k + b_kP_{k-1} + Z_k(t),\quad k = 1, 2, 3.
\end{equation}
We have the following results:
\begin{align*}
a_1 &= -0.034,\quad b_1 = 1-0.012;\\
a_2 &= 2.8\cdot 10^{-3}, \quad b_2 = 1-0.016;\\
a_3 &= 7\cdot 10^{-4},\quad b_3 = 1-0.10.
\end{align*}
All values of $b_i$ are very close to $1$. This compels us to test the random walk hypothesis for $P_k$. The Dickey-Fuller test for each principal component, see \cite{DFtest}, gives us the following $p$-values: $p_1 = 18\%$, $p_2 = 3.5\%$ and $p_3 = 0.44\%$. Thus we fail to reject the random walk hypothesis for level, but we reject it for the slope and curvature. 

\begin{table}
\begin{tabular}{|c|c|c|c|c|c|c|}
\hline
Component & Level $P_1$ & Slope $P_2$ & Curvature $P_3$ \\
\hline
Skewness of $Z_k$ & 0.12 & -0.8 & -0.14 \\
\hline
Skewness of $Z_k/V$ & 0.24 & -0.37 & 0.08 \\
\hline
Kurtosis of $Z_k$ & 3.59 & 6.49 & 4.62 \\
\hline
Kurtosis of $Z_k/V$  & 3.92 & 4.01 & 4.58 \\
\hline
\end{tabular}
\caption{Skewness and kurtosis for autoregression innovations from~\eqref{eq:AR}. We see that division by $V$ makes $Z_2$ (innovations for slope) but not $Z_1$ and $Z_3$ (innovations for level and intercept) closer to Gaussian. Recall that skewness and kurtosis for the Gaussian law are 0 and 3.}
\label{table:norm}
\end{table}

For each $P_k$, we compute the skewness and kurtosis of innovations $Z_k(t)$ and of $Z_k(t)/V(t)$ from~\eqref{eq:AR}. The {\it empirical skewness} and {\it empirical kurtosis} of data $\varepsilon_1, \ldots, \varepsilon_N$ are defined as:
\begin{equation}
\label{eq:skewness-kurtosis}
\frac{\hat{m}_3}{s^3} \quad \mbox{and}\quad \frac{\hat{m}_4}{s^4},\quad \mbox{where}\quad  \hat{m}_k := \frac1N\sum\limits_{t=1}^N(\varepsilon_t - \overline{\varepsilon})^k.
\end{equation}
Here, $\overline{\varepsilon}$ and $s$ are the empirical mean and empirical standard deviation of this data. If kurtosis is closer to 3, and if skewness is closer to 0, this is a sign that the distribution is closer to Gaussian. One sees that dividing by VIX makes innovations closer to the normal distribution for the slope $P_2$, but not for the level $P_1$ and the curvature $P_3$, see Table~\ref{table:norm}. Figures~\ref{fig:qq} and~\ref{fig:acf} suggest that it is reasonable to model each innovations $Z_1$, $Z_2/V_2$, $Z_3$ as IID univariate Gaussian. 


\subsection{Motivation for our models} For cases when dividing innovations by VIX improves them, we take
\begin{equation}
\label{eq:corp}
P(t) = a + bP(t-1) + V(t)Z(t)
\end{equation}
using ordinary least squares regression after dividing by $V(t)$. But this would lack an intercept. We add a constant after normalization, which traslates into an additional term $cV(t)$ at the right-hand side of~\eqref{eq:corp}. We get:
\begin{equation}
\label{eq:ncorp}
P(t) = a + bP(t-1) + cV(t) + V(t)Z(t).
\end{equation}
Generalizing~\eqref{eq:ncorp} for many dimensions gives us~\eqref{eq:AR-SV}. We allow for the case when some innovations benefit from division by VIX but other innovations do not. The main result of consdering univariate models in~\eqref{eq:AR} is: Dividing innovations by VIX improves them (by making them closer to IID Gaussian) for the slope (PC2) but not for the level (PC1) and curvature (PC3). 

\subsection{Bivariate and trivariate models} Since curvature explains only a small part of the variance, let us model the first two principal components: level and slope. 
\begin{align}
\label{eq:bivariate}
\begin{split}
\ln V(t) &= \alpha + \beta \ln V(t-1) + Z_0(t);\\
P_1(t) &= a_1 + b_{11}P_1(t-1) + b_{12}P_1(t-1) + c_1V(t) + Z_1(t);\\
P_2(t) &= a_2 + b_{21}P_2(t-1) + b_{22}P_2(t-1) + c_2V(t) + V(t)Z_2(t).
\end{split}
\end{align}
We fit the second and third linear regressions in~\eqref{eq:bivariate} as follows:
\begin{align}
\label{eq:bivariate-fit}
\begin{split}
\mathbf{a} &= \begin{bmatrix} 0.3 & 0.079\end{bmatrix}\\
\mathbf{c} &= \begin{bmatrix} -0.0172 & -0.0039\end{bmatrix}\\
\mathbf{B} &= 
\begin{bmatrix}
1 - 0.0141 & 0.031\\ 
0.0005 & 1 - 0.0149
\end{bmatrix}
\end{split}
\end{align}
The $p$-values from the Student $t$-test show that we fail to reject hypotheses $b_{12} = 0$ and $b_{21} = 0$. Thus we might as well assume that the matrix $\mathbf{B}$ is diagonal.

\begin{figure}[t]
\centering
\subfloat[$Z_1$]{\includegraphics[width=5cm]{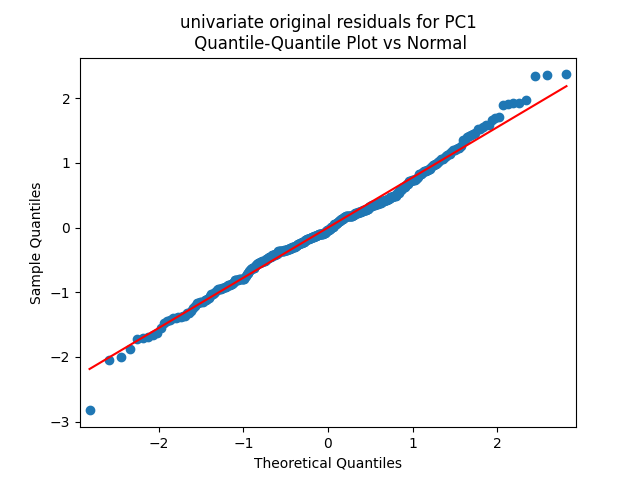}}
\subfloat[$Z_2$]{\includegraphics[width=5cm]{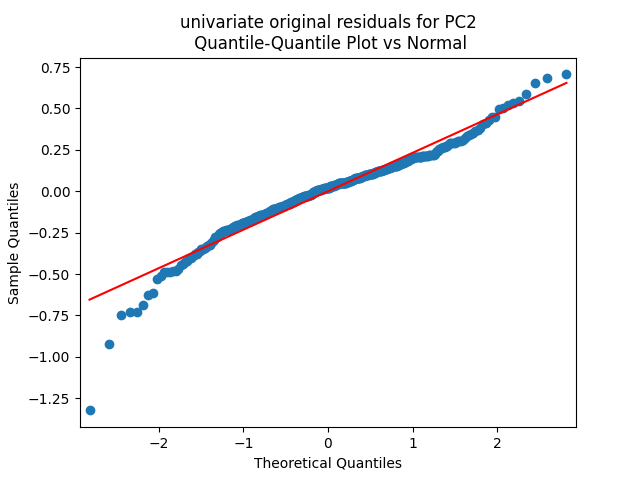}}
\subfloat[$Z_2/V$]{\includegraphics[width=5cm]{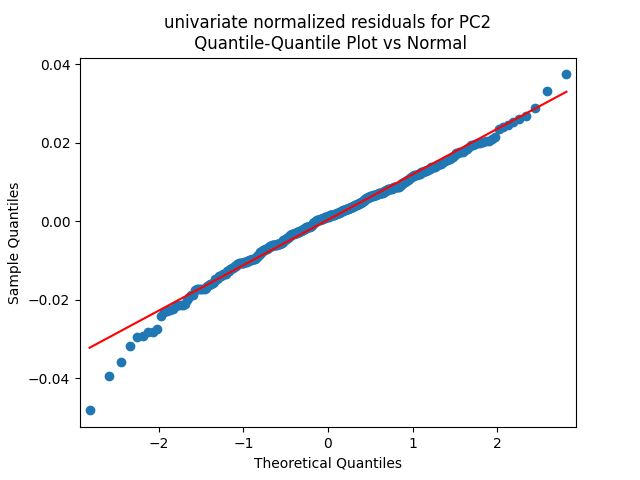}}
\caption{The quantile-quantile plots versus the Gaussian distributions for innovations $Z_1$ of PC1 (level); innovations $Z_2$ of PC2 (slope); and normalized innovations $Z_2/V$ of PC2, taken from~\eqref{eq:AR}. One sees that $Z_1$ is close to Gaussian, but $Z_2$ is not. However, normalizing $Z_2$ (dividing it by $V$) makes it closer to Gaussian.} 
\label{fig:qq}
\end{figure}

We can also generalize this for a version of vector autoregression with some but not all components having innovations with stochastic volatility. Let $\mathbf{P} = [P_1, P_2, P_3]$. For constant vectors $\mathbf{a}, \mathbf{c} \in \mathbb R^3$ and a constant $3\times 3$-matrix $\mathbf{B}$,
\begin{align}
\label{eq:trivariate}
\begin{split}
\ln V(t) &= \alpha + \beta \ln V(t-1) + Z_0(t);\\
\mathbf{P}(t) &= \mathbf{a} + \mathbf{B}\mathbf{P}(t-1) + \mathbf{c}V(t) + [Z_1(t), V(t)Z_2(t), Z_3(t)].
\end{split}
\end{align}
In the model~\eqref{eq:trivariate}, we can fit linear regression for $P_1$ and $P_3$. To fit the one for $P_2$, we divide the equation by $V$, and fit after that:
\begin{align}
\label{eq:trivariate-fit}
\begin{split}
\mathbf{a} &= [0.2844, 0.0667, -0.0054],\\
\mathbf{c} &= [-0.0164, -0.0033, 0.0003],\\
\mathbf{B} &= \begin{bmatrix}1 - 0.0140 & 0.0310 & -0.3881 \\ 0.0005 & 1 - 0.0109 & -0.2174 \\ 0.001 & 0.0061 & 1 - 0.0986\end{bmatrix}
\end{split}
\end{align}
Let us discuss the fit~\eqref{eq:trivariate-fit}. The $p$-values for the Student $t$-test in each regression for $\triangle P_k(t)$ for the factor $P_l(t-1)$ if $k \ne l$ are greater than $5\%$, except for $k = 3$ and $l = 2$; then $p = 4.8\%$. Thus we fail to reject (almost) all cross-dependence of $\triangle P_k(t)$ upon $P_l(t-1)$ for $k \ne l$. In sum, we might assume that $\mathbf{B}$ is diagonal. Also, the $p$-value for $c_k = 0$ is $p_1 = 0.1\%$, $p_2 = 8.4\%$, $p_3 = 50\%$, so we reject $c_1 = 0$, but fail to reject $c_2 = 0$ and $c_3 = 0$. 

For innovations $Z_k$, the plots of the autocorrelation function (ACF) and the quantile-quantile (QQ) plot versus the normal distribution show that these can be reasonably described as IID Gaussian. Other plots were generated in the code and are not included here for lack of space; see the \texttt{GitHub} repository \texttt{asarantsev/vix-zeros} and the Python code. 

\begin{figure}[t]
\centering
\subfloat[$Z_1$]{\includegraphics[width=5cm]{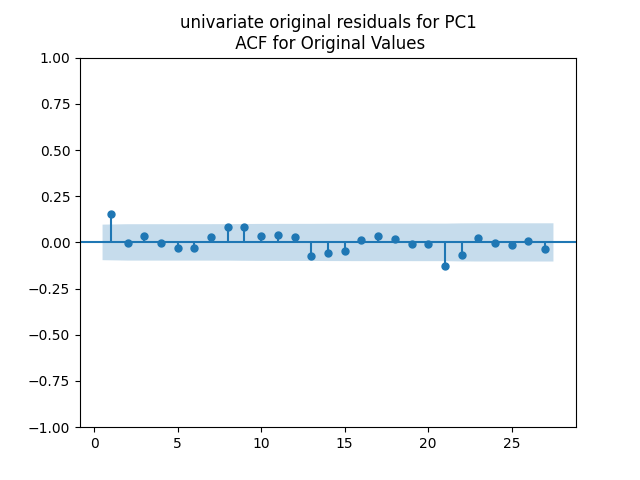}}
\subfloat[$Z_2/V$]{\includegraphics[width=5cm]{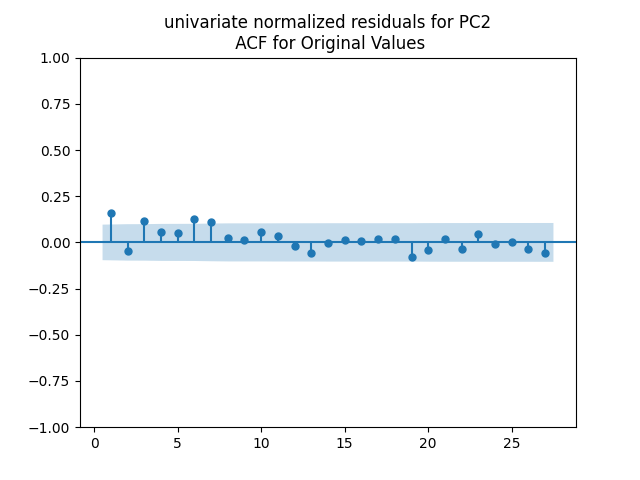}}
\subfloat[$Z_3$]{\includegraphics[width=5cm]{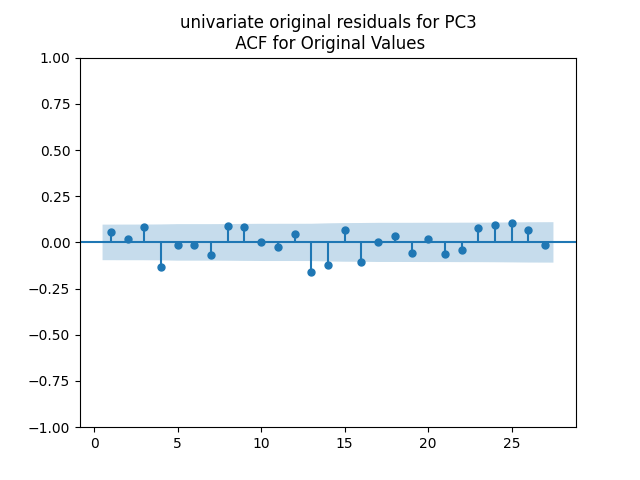}}
\caption{Autocorrelation functions for innovations $Z_1$ of PC1 (level); normalized innovations $Z_2/V$ of PC2 (slope); and innovations $Z_3$ of PC3 (curvature), taken from~\eqref{eq:AR}. We see that some values of the ACF are high enough to be outside of the shadowed area. If this value is indeed outside, then applying the white noise test for this lag value, we reject the white noise hypothesis. However, most ACF values in each case are small enough. We consider it reasonable to model $Z_1, Z_2, Z_3$ using IID Gaussian.} 
\label{fig:acf}
\end{figure}

The bivariate and trivariate models fall under the conditions of Theorem~\ref{thm:stationary}, because for both~\eqref{eq:bivariate-fit} and~\eqref{eq:trivariate-fit} have real eigenvalues between 0 and 1: $0.93, 0.97, 0.98$ for the model~\eqref{eq:trivariate}, and $0.98, 0.99$ for the model~\eqref{eq:bivariate}. 

\section{Long-Term Stability Results}

\subsection{Statements and proofs for discrete time} Fix dimension $d \ge 1$ and another dimension $r = 0, \ldots, d$. Consider the model~\eqref{eq:AR-SV},~\eqref{eq:log-AR},~\eqref{eq:diag}.

\begin{theorem}
Assume $\beta \in (0, 1)$ and all eigenvalues of $\mathbf{B}$ have absolute value less than 1. In addition, assume that 
\begin{equation}
\label{eq:MGF-W}
\mathbb E\left[e^{uZ_0(n)}\right] < \infty,\quad |u| \le u_0;
\end{equation}
and for each $i = 1, \ldots, d$, there exists a $u_i > 0$ such that $\mathbb E[|Z_i(n)|^{u_i}] < \infty$. Then the process $(\ln V, \mathbf{X})$ has a stationary distribution in $\mathbb R^{d+1}$. 
\label{thm:stationary}
\end{theorem}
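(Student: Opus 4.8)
The plan is to construct the stationary distribution explicitly, exploiting the triangular (cascade) structure of~\eqref{eq:AR-SV}--\eqref{eq:log-AR}: the volatility component $\ln V$ evolves autonomously, and $\mathbf X$ is a stable linear recursion driven by $V$ and the noise. First I would treat $\ln V$. Since $\beta\in(0,1)$ and, by~\eqref{eq:MGF-W}, $Z_0$ has a moment generating function finite in a neighborhood of the origin --- hence finite moments of every order and an exponentially decaying tail --- the scalar AR(1) recursion~\eqref{eq:log-AR} has the two-sided strictly stationary solution $\ln V_*(t)=\tfrac{\alpha}{1-\beta}+\sum_{k\ge0}\beta^{k}\widetilde Z_0(t-k)$, where $(\widetilde Z_0(n),\widetilde{\mathbf Z}(n))_{n\in\mathbb Z}$ is a two-sided i.i.d.\ copy of the innovation sequence $(Z_0(n),\mathbf Z(n))$; the series converges almost surely and in every $L^q$ because the weights $\beta^{k}$ decay geometrically and the $\widetilde Z_0$ have light tails. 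Then $V_*(t)=e^{\ln V_*(t)}$ is strictly stationary with $\mathbb E[V_*(t)^{s}]=\mathbb E[e^{s\ln V_*(t)}]<\infty$ for $|s|\le u_0$, so $V_*$ has a finite moment (and a finite negative moment) of some positive order. If one prefers to start the chain from a deterministic $\ln V(0)$, the same computation gives $\sup_{t\ge0}\mathbb E[V(t)^{s}]<\infty$ for $|s|\le u_0$, together with $\sup_t\mathbb E[(\ln V(t))^2]<\infty$.

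Next I would handle $\mathbf X$. Because all eigenvalues of $\mathbf B$ lie in the open unit disc, there are $C>0$ and $\rho\in(0,1)$ with $|\!|\mathbf B^{n}|\!|_{S}\le C\rho^{n}$ for all $n\ge0$. Fix an exponent $\gamma\in(0,1)$ small enough that $\gamma<u_i$ for every $i=1,\dots,d$ and that $\gamma u_i/(u_i-\gamma)\le u_0$ for every such $i$; this is possible since the left-hand side tends to $0$ as $\gamma\downarrow0$, and then automatically $\gamma<u_0$. Iterating~\eqref{eq:AR-SV} backwards against the stationary volatility, set $\mathbf X_*(t)=\sum_{k\ge0}\mathbf B^{k}\bigl(\mathbf a+\mathbf c\,V_*(t-k)+\xi_*(t-k)\,\widetilde{\mathbf Z}(t-k)\bigr)$, with $\xi_*(\cdot)$ built from $V_*(\cdot)$ as in~\eqref{eq:diag}. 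I would prove convergence of this series in $L^{\gamma}$ via the elementary bound $|\sum_k v_k|_2^{\gamma}\le\sum_k|v_k|_2^{\gamma}$ (valid for $\gamma\le1$), the estimate $|\!|\mathbf B^{k}|\!|_S^{\gamma}\le C^{\gamma}\rho^{\gamma k}$, and a bound $\mathbb E[\,|\mathbf a+\mathbf c V_*(t-k)+\xi_*(t-k)\widetilde{\mathbf Z}(t-k)|_2^{\gamma}\,]\le K<\infty$ uniform in $k$. For a coordinate whose noise is scaled by $V_*$, the relevant term $\mathbb E[V_*^{\gamma}|\widetilde Z_i|^{\gamma}]$ is controlled by H\"older's inequality with exponents $u_i/\gamma$ and its conjugate, using $\mathbb E[V_*^{\gamma u_i/(u_i-\gamma)}]<\infty$ from the first step and $\mathbb E[|\widetilde Z_i|^{u_i}]<\infty$ by hypothesis; the remaining terms are handled directly, since $\mathbb E[|\widetilde Z_i|^{\gamma}]<\infty$ whenever $\gamma\le u_i$ and $\mathbb E[V_*^{\gamma}]<\infty$ since $\gamma<u_0$. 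Summing the geometric series yields $\mathbb E[|\mathbf X_*(t)|_2^{\gamma}]\le C^{\gamma}K/(1-\rho^{\gamma})<\infty$, so the series converges in $L^{\gamma}$ and defines a random vector $\mathbf X_*(t)$.

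To conclude, I would check that the two-sided process $\bigl(\ln V_*(t),\mathbf X_*(t)\bigr)_{t\in\mathbb Z}$ is strictly stationary and satisfies~\eqref{eq:AR-SV}--\eqref{eq:log-AR} with the time-$t$ innovation vector $(\widetilde Z_0(t),\widetilde{\mathbf Z}(t))$ independent of the past (both $\ln V_*(s)$ and $\mathbf X_*(s)$ for $s\le t-1$ are functions of innovations indexed by $s$ and earlier). Hence it is a stationary version of the Markov chain $(\ln V,\mathbf X)$, and the law of $\bigl(\ln V_*(0),\mathbf X_*(0)\bigr)$ is a stationary distribution on $\mathbb R^{d+1}$. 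Alternatively, one may avoid the explicit series: the uniform moment bound $\sup_{t}\mathbb E\bigl[|\ln V(t)|^{\gamma}+|\mathbf X(t)|_2^{\gamma}\bigr]<\infty$ from the first two steps gives, by Markov's inequality, tightness of the family of time marginals; since the one-step transition kernel is Feller (the update map is jointly continuous in the current state and the driving noise), any weak limit point of the Ces\`aro averages $T^{-1}\sum_{t=1}^{T}\mathcal L\bigl(\ln V(t),\mathbf X(t)\bigr)$ is invariant by the Krylov--Bogolyubov argument.

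I expect the main obstacle to be the moment bookkeeping in the second step. Since each $Z_i$ is assumed only to possess some positive moment $u_i$, not an exponential one, the exponent $\gamma$ must be taken small; and because $V(t)$ and $Z_i(t)$ share the common time-$t$ innovation vector $(Z_0(t),\dots,Z_d(t))$, one cannot factor $\mathbb E[V(t)^{\gamma}|Z_i(t)|^{\gamma}]$ by independence and must instead calibrate $\gamma$ so that H\"older's inequality closes against the finitely many moments of $V$ supplied by~\eqref{eq:MGF-W}. Everything else --- stationarity of the explicit series, the geometric summation, the Feller property --- is routine.
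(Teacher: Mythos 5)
Your proof is correct, and it reaches the conclusion by a route that is mathematically equivalent to, but more self-contained than, the paper's. The paper rewrites the recursion as $\mathbf{X}(n)=A_n\mathbf{X}(n-1)+B_n$ with $A_n\equiv\mathbf{B}$ and invokes the Brandt--Bougerol--Picard theorem on stochastic recursions, reducing everything to three conditions: logarithmic moments of $A_n$ and $B_n$, and a negative top Lyapunov exponent, the last obtained from Gelfand's formula for the spectral radius of $\mathbf{B}$. You instead inline the proof of that cited theorem for this special case: you write down the two-sided backward-iteration series --- which is exactly the representation $\mathbf{X}(\infty)$ in \eqref{eq:X-infty} that the paper only introduces later, in the proof of Theorem~\ref{thm:LLN} --- and establish its almost-sure absolute convergence via fractional moments of order $\gamma$. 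The moment bookkeeping is essentially the same in both arguments: the paper's choice $\varepsilon=0.5\min(u_0,u_1,\dots,u_d)$ in \eqref{eq:eps-u} is implicitly the Cauchy--Schwarz version of your H\"older step, and both are needed for precisely the reason you flag, namely that $V(t)$ and $Z_i(t)$ share the time-$t$ innovation vector and the product $\mathbb E[V^{\gamma}(t)|Z_i(t)|^{\gamma}]$ cannot be factored by independence. What your route buys is independence from the external reference, an explicit stationary law, and a series representation immediately reusable for the Law of Large Numbers; what the paper's route buys is brevity, at the cost of importing a theorem whose full generality (random $A_n$) is not needed here. Your Krylov--Bogolyubov alternative via tightness and the Feller property is also a valid (and genuinely different) finish, though it produces the invariant measure less explicitly.
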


\begin{proof}  {\it Step 1.} We rewrite the model equation~\eqref{eq:AR-SV} as follows:
\begin{align}
\label{eq:rewrite}
\begin{split}
\mathbf{X}(n) &= A_n\mathbf{X}(n-1) + B_n;\\
A_n &:= \mathbf{B},\quad B_n := \mathbf{a} + \mathbf{c}V(n) + \xi(n)\mathbf{Z}(n).
\end{split}
\end{align}
The rest of the proof relies heavily on the main result of \cite{Brandt}, which is rewritten as \cite[Theorem 1.1]{Bougerol} for the multivariate setting and in a more general way. We need to show the following:
\begin{equation}
\label{eq:A-n}
\mathbb E\max(\ln|\!|A_n|\!|_2, 0) < \infty;
\end{equation}
\begin{equation}
\label{eq:B-n}
\mathbb E\max(\ln|B_n|_2, 0) < \infty;
\end{equation}
\begin{equation}
\label{eq:Lyapunov}
\inf_nn^{-1}\cdot \mathbb E\left[\ln|\!|A_1\cdot \ldots\cdot A_n|\!|_2\right] < 0.
\end{equation}

{\it Step 2.} Condition~\eqref{eq:A-n} follows immediately from the definition of $A_n$ in~\eqref{eq:rewrite}, since the matrix $\mathbf{B}$ is deterministic. 

{\it Step 3.} Condition~\eqref{eq:B-n} can be proved as follows: for any $\varepsilon > 0$, there exists a constant $C(\varepsilon)$ such that $\max(\ln u, 0) \le C(\varepsilon)u^{\varepsilon}$ for $u \ge 0$ (assuming $\ln 0 := -\infty$). Therefore, it suffices to show that $B_n \in L^{\varepsilon}$; in other words, 
\begin{equation}
\label{eq:B-n-u}
\mathbb E[|B_n|_2^{\varepsilon}] < \infty.
\end{equation}
Apply the triangle inequality: 
\begin{equation}
\label{eq:triangle-Bn}
|B_n|_2 \le |\mathbf{a}|_2 + |\mathbf{c}|_2V(n) + |\!|\xi(n)|\!|_2\cdot|\mathbf{Z}(n)|_2.
\end{equation}
The spectral norm of a diagonal matrix equals the maximal absolute value of its elements, in our case 
\begin{equation}
\label{eq:diag-xi}
|\!|\xi(n)|\!|_2 = \max(V(n), 1) \le V(n) + 1.
\end{equation}
In light of~\eqref{eq:triangle-Bn},~\eqref{eq:diag-xi}, to show~\eqref{eq:B-n-u}, we need 
$|\mathbf{a}|_2 + |\mathbf{c}|_2V(n) + (V(n)+1)\cdot|\mathbf{Z}(n)|_2 \in L^{\varepsilon}$. Pick  
\begin{equation}
\label{eq:eps-u}
\varepsilon = 0.5\min(u_0, u_1, \ldots, u_d),
\end{equation}
since by our companion article \cite[Lemma 1]{VIX-stocks}, $\mathbb E\left[V^{u_0}(n)\right] < \infty$; and from the assumptions we have: $\mathbb E\left[|Z_i(n)|^{2\varepsilon}\right] < \infty$ for all $i = 1, \ldots, d$; therefore, 
$$
\mathbb E\left[|\mathbf{Z}(n)|_2^{2\varepsilon}\right] = \mathbb E\left[Z_1^2(n) + \ldots + Z_d^2(n)\right]^{\varepsilon} < \infty. 
$$
We used the observation that a finite sum of random variables with finite moment of order $\delta$ also has finite moment of order $\delta$. 

{\it Step 4.} Condition~\eqref{eq:Lyapunov} follows from the Gelfand formula, see \cite[Corollary 5.6.14]{Matrix}:
\begin{equation}
\label{eq:Gelfand}
\lim\limits_{n \to \infty}n^{-1}\ln|\!|\mathbf{B}^n|\!|_2 = \ln\left(\max\{|\lambda| \mid \lambda \in \sigma(\mathbf{B})\}\right) < 0.
\end{equation}
Since we showed~\eqref{eq:A-n},~\eqref{eq:B-n},~\eqref{eq:Lyapunov}, we complete the proof. 
\end{proof}

\begin{theorem} Under assumptions of Theorem~\ref{thm:stationary}, if $\mathbf{W}$ has Lebesgue density which is everywhere positive in $\mathbb R^{d+1}$, then this process is ergodic. 
\label{thm:ergodicity}
\end{theorem}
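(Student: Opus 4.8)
The plan is to upgrade the conclusion of Theorem~\ref{thm:stationary} (existence of a stationary distribution) to ergodicity by verifying the standard sufficient conditions for convergence in total variation: the Markov chain $(\ln V(n), \mathbf{X}(n))$ is \emph{$\psi$-irreducible}, \emph{aperiodic}, and \emph{positive Harris recurrent}, after which the ergodic theorem in \cite{Stability} (the classical Meyn--Tweedie machinery) gives the claimed convergence. We already have a stationary distribution from Theorem~\ref{thm:stationary}, so positive recurrence will follow once irreducibility is established; the real content is irreducibility and aperiodicity, and these are where the hypothesis that $\mathbf{W}$ has an everywhere-positive Lebesgue density on $\mathbb{R}^{d+1}$ enters.

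First I would set up the chain precisely on the state space $\mathbb{R}^{d+1}$ with coordinates $(y, \mathbf{x})$ where $y = \ln V$. The one-step transition decomposes as: $y' = \alpha + \beta y + Z_0$, and then $\mathbf{x}' = \mathbf{a} + \mathbf{B}\mathbf{x} + \mathbf{c}e^{y'} + \xi(e^{y'})\mathbf{Z}$, where $(Z_0,\mathbf{Z})$ has an everywhere-positive density on $\mathbb{R}^{d+1}$ (this is what ``$\mathbf{W}$ has everywhere-positive density'' should mean in the discrete-time statement — I would state it as an assumption on the innovation vector $(Z_0,Z_1,\dots,Z_d)$ for clarity). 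Then I would argue: conditionally on the current state, $y'$ ranges over all of $\mathbb{R}$ with positive density; conditionally on $y'$, the map $\mathbf{z}\mapsto \mathbf{a}+\mathbf{B}\mathbf{x}+\mathbf{c}e^{y'}+\xi(e^{y'})\mathbf{z}$ is an invertible affine map (since $\xi(e^{y'})=\mathrm{diag}(e^{y'},\dots,e^{y'},1,\dots,1)$ is invertible), so $\mathbf{x}'$ also has an everywhere-positive density. Hence the one-step transition kernel $P((y,\mathbf{x}),\cdot)$ has a density that is strictly positive on all of $\mathbb{R}^{d+1}$, uniformly bounded below on compacts. This single fact delivers $\psi$-irreducibility with $\psi = $ Lebesgue measure, aperiodicity (no cyclic decomposition is possible because every open set is reached in one step with positive probability), and the fact that all compact sets are small (petite) sets.

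The remaining ingredient is positive Harris recurrence. Rather than re-deriving a drift (Foster--Lyapunov) condition from scratch, I would note that Theorem~\ref{thm:stationary} already produced a stationary probability measure $\pi$; combined with $\psi$-irreducibility this forces the chain to be positive recurrent (a $\psi$-irreducible chain admitting an invariant probability measure is positive recurrent), and $\psi$-irreducibility plus the existence of an everywhere-positive transition density upgrades this to positive \emph{Harris} recurrence (the chain cannot have a null set of ``bad'' initial conditions). Then \cite[Theorem 13.0.1]{Stability}, or the corresponding aperiodic ergodic theorem, yields $\|P^n((y,\mathbf{x}),\cdot) - \pi\|_{\mathrm{TV}} \to 0$ for every initial state $(y,\mathbf{x})$, which is exactly the definition of ergodicity used in the excerpt; uniqueness of $\pi$ is automatic for an irreducible positive recurrent chain.

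The main obstacle is not any deep estimate but bookkeeping: making sure the ``everywhere-positive density'' hypothesis is transported correctly through the nonlinear change of variables $y \mapsto V = e^y$ and through the affine-in-$\mathbf{Z}$ structure of~\eqref{eq:AR-SV}, so that the composite kernel genuinely has full support with a density bounded below on compacts — the subtlety being that the scaling matrix $\xi$ depends on the freshly-drawn $y'$, so one must integrate over $y'$ and check measurability and the lower bound uniformly on compact sets of the \emph{previous} state. I expect this to be a short, clean verification once the decomposition is written out, but it is the step that has to be done carefully; everything after it is a citation to the Meyn--Tweedie ergodic theorem.
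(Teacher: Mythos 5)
Your proposal is correct and follows essentially the same route as the paper: the paper also derives ergodicity from the everywhere-positive one-step transition density together with the stationary distribution from Theorem~\ref{thm:stationary}, invoking an appendix lemma (Lemma~\ref{lemma:ergodic-discrete}) that establishes exactly the irreducibility, aperiodicity, and Harris recurrence you describe before citing \cite[Theorem 13.0.1]{Stability}. Your explicit verification of the positive transition density via the invertible affine map $\mathbf{z}\mapsto \mathbf{a}+\mathbf{B}\mathbf{x}+\mathbf{c}e^{y'}+\xi(e^{y'})\mathbf{z}$ fills in a step the paper merely calls ``straightforward.''
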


\begin{proof}
This process has a stationary distribution, as shown in Theorem~\ref{thm:stationary}. From positivity of Lebesgue density of $\mathbf{W}$ it is straightforward to show that the transition density of this Markov process is everywhere positive as well. Apply Lemma~\ref{lemma:ergodic-discrete} from the Appendix. 
\end{proof}

\begin{theorem}
Under assumptions of Theorem~\ref{thm:ergodicity}, if $\mathbb E[e^{2Z_0(n)}] < \infty$ and $\mathbb E[Z^2_i(n)] < \infty$ for each $i = 1, \ldots, d$, then the Strong Law of Large Numbers is satisfied: As $N \to \infty$, 
\begin{equation}
\label{eq:LLN}
\frac{\mathbf{X}(1) + \ldots + \mathbf{X}(N)}{N} \to \int_{\mathbb R^d}\mathbf{x}\,\pi(\mathrm{d}\mathbf{x})\quad \mbox{almost surely,}
\end{equation}
where $\pi$ is the marginal of the stationary distribution for $(\ln V, \mathbf{X})$ corresponding to $\mathbf{X}$. 
\label{thm:LLN}
\end{theorem}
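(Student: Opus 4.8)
The plan is to deduce the Strong Law of Large Numbers from the ergodicity established in Theorem~\ref{thm:ergodicity} together with an integrability bound on the stationary distribution $\pi$. The standard tool is the ergodic theorem for Harris-recurrent Markov chains: a positive Harris chain with invariant probability measure $\pi$ satisfies $N^{-1}\sum_{n=1}^N f(\mathbf{X}(n)) \to \int f\,\mathrm{d}\pi$ almost surely for every $\pi$-integrable $f$, regardless of the initial condition. Since Theorem~\ref{thm:ergodicity} already gives a unique stationary distribution and convergence in total variation from every starting point (and the everywhere-positive transition density makes the chain $\psi$-irreducible and aperiodic, hence positive Harris), the only remaining thing to check is that the coordinate map $\mathbf{x} \mapsto \mathbf{x}$ is $\pi$-integrable, i.e. $\int_{\mathbb R^d} |\mathbf{x}|_2\,\pi(\mathrm{d}\mathbf{x}) < \infty$. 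Then apply the ergodic theorem componentwise with $f(\mathbf{x}) = x_i$.

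The main work, then, is the moment bound $\mathbb E_\pi[|\mathbf{X}|_2] < \infty$, and this is where the new hypotheses $\mathbb E[e^{2Z_0(n)}] < \infty$ and $\mathbb E[Z_i^2(n)] < \infty$ enter. I would argue as follows. Start the chain in stationarity, so $(\ln V(n), \mathbf{X}(n)) \sim$ the stationary law for all $n$. Unrolling the recursion~\eqref{eq:rewrite} backwards, $\mathbf{X}(n) = \sum_{k\ge 0} \mathbf{B}^k B_{n-k}$ in $L^2$ (the series converges because the spectral radius of $\mathbf{B}$ is less than $1$, so $\|\mathbf{B}^k\|_2 \le C\rho^k$ for some $\rho < 1$ by Gelfand's formula~\eqref{eq:Gelfand}). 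Taking $L^2$-norms and using the triangle inequality, it suffices to show $\sup_n \mathbb E[|B_n|_2^2] < \infty$. From~\eqref{eq:triangle-Bn} and~\eqref{eq:diag-xi}, $|B_n|_2 \le |\mathbf{a}|_2 + |\mathbf{c}|_2 V(n) + (V(n)+1)|\mathbf{Z}(n)|_2$; since $V(n)$ and $\mathbf{Z}(n)$ are independent (the innovation $\mathbf{Z}(n)$ is independent of the past and of $Z_0(n)$ — one should confirm the precise independence structure, but in the stated IID model $\mathbf{Z}(n)$ is independent of $V(n)$ which depends only on $Z_0(1),\dots,Z_0(n)$), we get $\mathbb E[(V(n)+1)^2|\mathbf{Z}(n)|_2^2] = \mathbb E[(V(n)+1)^2]\,\mathbb E[|\mathbf{Z}(n)|_2^2]$. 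The second factor is finite because $\mathbb E[Z_i^2(n)] < \infty$ for each $i$; the first factor reduces to $\mathbb E[V^2(n)] = \mathbb E[e^{2\ln V(n)}] < \infty$, which is exactly what $\mathbb E[e^{2Z_0(n)}] < \infty$ buys us via the stationarity of $\ln V$ (the stationary law of the OU-type recursion~\eqref{eq:log-AR} with $\beta\in(0,1)$ is $\ln V_\infty = \frac{\alpha}{1-\beta} + \sum_{k\ge 0}\beta^k Z_0(-k)$, whose exponential moment of order $2$ is controlled by $\prod_k \mathbb E[e^{2\beta^k Z_0}]$, finite since $2\beta^k \le 2$ and using the cited \cite[Lemma 1]{VIX-stocks}). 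This gives $\sup_n \mathbb E[|B_n|_2^2] < \infty$, hence $\mathbb E_\pi[|\mathbf{X}|_2^2] < \infty$, which is more than enough for $\pi$-integrability of the identity map.

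Having established $\int |\mathbf{x}|_2\,\pi(\mathrm{d}\mathbf{x}) < \infty$, I would invoke the Harris ergodic theorem (e.g. \cite[Theorem 17.0.1]{Stability}) to conclude~\eqref{eq:LLN}: apply it to each coordinate function $f_i(\mathbf{x}) = x_i$, which is $\pi$-integrable by the bound just proved, obtaining $N^{-1}\sum_{n=1}^N X_i(n) \to \int x_i\,\pi(\mathrm{d}\mathbf{x})$ almost surely for the chain started at any $\mathbf{x}$ (equivalently, at any initial condition for $(\ln V, \mathbf{X})$), and stack the coordinates. One mild technical point: the theorem is naturally stated for the joint chain $(\ln V(n), \mathbf{X}(n))$, which is the Markov process shown to be ergodic; the conclusion for $\mathbf{X}$ alone follows by integrating out $\ln V$, which is why $\pi$ is described in the statement as the $\mathbf{X}$-marginal of the joint stationary law.

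I expect the principal obstacle to be purely bookkeeping rather than conceptual: carefully justifying the backward series representation of $\mathbf{X}(n)$ under stationarity and the interchange of expectation and infinite sum (a routine application of Minkowski's inequality in $L^2$ given the geometric decay of $\|\mathbf{B}^k\|_2$), and pinning down the exact independence between $V(n)$ and $\mathbf{Z}(n)$ so that the product-of-expectations step is legitimate. The exponential-moment input for $\mathbb E[V^2(n)]$ is already packaged in \cite[Lemma 1]{VIX-stocks} (used earlier in the proof of Theorem~\ref{thm:stationary}), so no new estimate of that kind is needed — one only needs it at exponent $2$ rather than $u_0$, hence the extra hypothesis $\mathbb E[e^{2Z_0(n)}]<\infty$.
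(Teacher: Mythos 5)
Your overall route is the same as the paper's: invoke the Harris-chain ergodic theorem \cite[Theorem 17.0.1]{Stability} for the joint chain $(\ln V, \mathbf{X})$, reduce everything to $\pi$-integrability of the coordinate map, and establish that integrability from the backward series representation $\mathbf{X}(\infty)=\sum_{k\ge 0}\mathbf{B}^k B_{-k}$ of the stationary law together with a uniform moment bound on $B_n$ and the geometric decay $|\!|\mathbf{B}^k|\!|_2\le C\rho^k$. (You are in fact more careful than the paper on this last point: the paper writes $|\!|\mathbf{B}|\!|_2^n$, which need not decay even when the spectral radius is below $1$; your Gelfand-formula bound on $|\!|\mathbf{B}^k|\!|_2$ is the correct statement.)

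However, one step as written fails: the factorization $\mathbb E[(V(n)+1)^2|\mathbf{Z}(n)|_2^2]=\mathbb E[(V(n)+1)^2]\,\mathbb E[|\mathbf{Z}(n)|_2^2]$. The model only assumes the vectors $(Z_0(t),Z_1(t),\ldots,Z_d(t))$ are IID across $t$; within a fixed $t$ the components may be dependent, and since $V(n)$ is a function of $Z_0(n)$ (and the past) while $\mathbf{Z}(n)=(Z_1(n),\ldots,Z_d(n))$, the variables $V(n)$ and $\mathbf{Z}(n)$ are in general not independent. You flagged this but did not resolve it, and under your $L^2$ target it cannot be resolved with the stated hypotheses: without independence, controlling $\mathbb E[(V+1)^2|\mathbf{Z}|_2^2]$ by Cauchy--Schwarz requires fourth moments of $V$ and $\mathbf{Z}$, which are not assumed. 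The repair is to lower the target to $L^1$, which is all that $\pi$-integrability of the identity map needs: by Cauchy--Schwarz, $\mathbb E[(V(n)+1)|\mathbf{Z}(n)|_2]\le(\mathbb E[(V(n)+1)^2])^{1/2}(\mathbb E[|\mathbf{Z}(n)|_2^2])^{1/2}$, and these two factors are exactly the moments supplied by $\mathbb E[e^{2Z_0(n)}]<\infty$ and $\mathbb E[Z_i^2(n)]<\infty$. This is precisely how the paper proceeds, via its estimate \eqref{eq:B-n-u} with $u_0=\cdots=u_d=2$, hence $\varepsilon=1$. With that change, the rest of your argument (Minkowski along the geometric series, then the coordinatewise ergodic theorem and integrating out $\ln V$) goes through.
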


\begin{proof}
Using \cite[Theorem 17.0.1(i)]{Stability}, we need only to prove that the right-hand side of~\eqref{eq:LLN} is finite. According to the main result of \cite{Brandt} and \cite[Theorem 1.1]{Bougerol}, the stationary probability measure $\pi$ is the distribution of the random variable
\begin{equation}
\label{eq:X-infty}
\mathbf{X}(\infty) = \sum\limits_{n=0}^{\infty}\mathbf{B}^n(\mathbf{a} + \mathbf{c}V(-n) + \xi(-n)\mathbf{Z}(-n)).
\end{equation}
We extend sequence of innovations $(W(n), \mathbf{Z}(n))$ for integer $n \le 0$. Take $|\cdot|_2$-norm in~\eqref{eq:X-infty}: 
\begin{equation}
\label{eq:norm-infty}
|\mathbf{X}(\infty)|_2 \le \sum\limits_{n=0}^{\infty}|\!|\mathbf{B}|\!|_2^n|\mathbf{a} + \mathbf{c}V(-n) + \xi(-n)\mathbf{Z}(-n)|_2.
\end{equation}
Use~\eqref{eq:B-n-u} from the proof of Theorem~\ref{thm:stationary}. Apply this to $u_0 = \ldots = u_d = 2$. From~\eqref{eq:eps-u}, we get $\varepsilon = 1$. Thus we show that the constant $\mathbb E\left[|B_n|_2\right] < \infty$. From stationarity of $B_n$, it follows that this absolute moment is independent of $n$. Therefore, the series in the right-hand side of~\eqref{eq:norm-infty} converges. This completes the proof.
\end{proof}

\subsection{Statements for continuous time} Our discrete-time models are Markov: the value at step $t$ depends only on the value at step $t-1$. Therefore, they can be readily generalized to the continuous-time setting. The innovation terms $W_i(t)$ become a Brownian motion with zero drift but a nontrivial constant diffusion coefficient. An autoregression of order 1 becomes an Ornstein-Uhlenbeck process for $\ln V(t)$, as i~\eqref{eq:log-OU}, see the Introduction. For the level $P_1(t)$ in continuous time:
$$
\mathrm{d}P_1(t) = (a_1 - b_1P_1(t) + c_1V(t))\,\mathrm{d}t + \mathrm{d}W_1(t).
$$
The equation for the slope $P_2(t)$ becomes
$$
\mathrm{d}P_2(t) = (a_2 - b_2P_2(t) + c_2V(t))\,\mathrm{d}t + V(t)\,\mathrm{d}W_2(t).
$$
We can also generalize this to the multivariate case and get~\eqref{eq:diag},~\eqref{eq:OU-SV},~\eqref{eq:log-OU}. We can rewrite this for $\mathbf{X}(t) = (X_1(t), \ldots, X_d(t))$ as:
\begin{align}
\label{eq:SDE}
\begin{split}
\mathrm{d}\mathbf{X}(t) &= (\mathbf{a}-\mathbf{B}\mathbf{X}(t)+\mathbf{c}V(t))\,\mathrm{d}t + \mathrm{d}\mathbf{M}(t),\\
\mathbf{M}(t) &= (M_1(t), \ldots, M_d(t));\\
\mathrm{d}M_i(t) &= V(t)\,\mathrm{d}W_i(t),\quad i = 1, \ldots, r;\\
\mathrm{d}M_i(t) &= \,\mathrm{d}W_i(t),\quad i = r+1, \ldots, d,
\end{split}
\end{align}
where $\mathbf{W} = (W_0, \ldots, W_d)$ is a $d+1$-dimensional Brownian motion with mean vector zero and covariance matrix $\Sigma$. 

\begin{asmp} Here, $\mathbf{a}, \mathbf{c} \in \mathbb R^d$ are constant vectors, $\mathbf{B}$ is a $d\times d$ constant matrix with all eigenvalues having positive real part.
\label{asmp:const}
\end{asmp}

We can generalize the dynamics of the local $\mathbb R^d$-valued martingale $\mathbf{M} = (\mathbf{M}(t),\, t \ge 0)$.

\begin{asmp} For each $i$, the $i$th component is a local martingale ${M}_i$ with quadratic variation $\langle M_i\rangle$ which satisfies $\mathrm{d}\langle M_i\rangle_t = Q_{i}(V(t))\,\mathrm{d}t$, where $Q_i : (0, \infty) \to [0, \infty)$, and for some  constant $c_i$ we have: $Q_i(v) \le c_i(v^2+1)$ for all $v > 0$. 
\label{asmp:mgle}
\end{asmp} 

Lastly, we set an assumption on the Ornstein-Uhlenbeck process for $\ln V$ in~\eqref{eq:log-OU}.

\begin{asmp}
In~\eqref{eq:log-OU}, we have: $\beta > 0$.
\label{asmp:beta}
\end{asmp}

\begin{theorem} Consider the combined $d+1$-dimensional model~\eqref{eq:log-OU} and~\eqref{eq:SDE}: $(\ln V(t), \mathbf{X}(t))$. Under Assumptions~\ref{asmp:const},~\ref{asmp:mgle},~\ref{asmp:beta}, the process is ergodic, and satisfies the Strong Law of Large Numbers: For any measurable function $f : \mathbb R^d \to \mathbb R$ such that $|f(\mathbf{x})| \le |\mathbf{x}|_2^2$ for all $\mathbf{x} \in \mathbb R^d$, 
\begin{equation}
\label{eq:SLLN-ct}
\frac1T\int_0^Tf(\mathbf{X}(t))\,\mathrm{d}t \to \int_{\mathbb R^d}f\mathrm{d}\pi,\quad T \to \infty,\quad \mbox{a.s.}
\end{equation}
where $\pi$ is the marginal of the stationary distribution for $(\ln V, \mathbf{X})$ corresponding to $\mathbf{X}$.
\label{thm:cont-time} 
\end{theorem}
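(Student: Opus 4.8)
The plan is to treat the joint process $(\ln V(t), \mathbf{X}(t))$ as a time‑homogeneous diffusion on $\mathbb R^{d+1}$ and apply the Meyn--Tweedie stability theory for continuous‑time Markov processes \cite{MT1993a}, in the same spirit as the discrete‑time Theorems~\ref{thm:stationary}--\ref{thm:LLN}. Everything reduces to two ingredients: (a) a Foster--Lyapunov drift inequality $\mathcal L\mathcal V \le -c\mathcal V + b$ for the extended generator $\mathcal L$, with a function $\mathcal V$ that dominates $1+|\mathbf{x}|_2^2$; and (b) irreducibility together with the statement that compact sets are petite. For (b), write $y=\ln v$; because $V(t)=e^{\ln V(t)}>0$ for all $t$ and the driving noise $(W_0,\mathbf{W})$ is non‑degenerate (e.g. the model~\eqref{eq:SDE} with positive‑definite $\Sigma$), the diffusion matrix of $(\ln V,\mathbf{X})$ is positive definite at every point, so the process is a non‑degenerate hypoelliptic diffusion with smooth, everywhere positive transition densities. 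Hence it is open‑set irreducible and aperiodic and every compact subset of $\mathbb R^{d+1}$ is petite; this is also the one place where the non‑degeneracy behind Assumption~\ref{asmp:mgle} and the strict positivity of $V$ are used.

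For ingredient (a), the candidate Lyapunov function is
\[
\mathcal V(y,\mathbf{x}) \;=\; 1 + \mathbf{x}^{T}\mathbf{P}\mathbf{x} + \lambda_1 e^{2y} + \lambda_2 y^{2},
\]
where $\mathbf{P}$ is the symmetric positive definite solution of the Lyapunov equation $\mathbf{P}\mathbf{B}+\mathbf{B}^{T}\mathbf{P}=I_d$, which exists since all eigenvalues of $\mathbf{B}$ have positive real part (Assumption~\ref{asmp:const}), and $\lambda_1,\lambda_2>0$ are to be chosen. Applying $\mathcal L$: the drift term $2\mathbf{x}^{T}\mathbf{P}(\mathbf{a}-\mathbf{B}\mathbf{x}+\mathbf{c}v)$ produces $-2\mathbf{x}^{T}\mathbf{P}\mathbf{B}\mathbf{x}=-|\mathbf{x}|_2^2$ plus lower‑order pieces absorbed by Young's inequality into $\tfrac12|\mathbf{x}|_2^2+C(v^2+1)$; and, using Assumption~\ref{asmp:mgle} and the Kunita--Watanabe inequality $|\mathrm d\langle M_i,M_j\rangle_t/\mathrm dt|\le\sqrt{c_ic_j}\,(V(t)^2+1)$, the second‑order part contributes $O(v^2+1)$. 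Thus $\mathcal L(\mathbf{x}^{T}\mathbf{P}\mathbf{x})\le -\tfrac12|\mathbf{x}|_2^2+C(v^2+1)$. The key point is that this $v^2=e^{2y}$ growth is dominated by the exponential term in $\mathcal V$: since $\ln V$ is Ornstein--Uhlenbeck with $\beta>0$ (Assumption~\ref{asmp:beta}), $\mathcal L(e^{2y})=2e^{2y}\bigl(\alpha+\tfrac12\Sigma_{00}-\beta y\bigr)=2v^{2}\bigl(\alpha+\tfrac12\Sigma_{00}-\beta\ln v\bigr)$, and the factor $-\beta\ln v$ eventually beats any multiple of $v^2$, so $\mathcal L(e^{2y})\le -\gamma e^{2y}+C$ for any prescribed $\gamma$; likewise $\mathcal L(y^2)\le-\beta y^2+C$. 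Choosing $\lambda_1$ large enough to cancel the $C(v^2+1)$ produced above, then $\lambda_2$ (used only to make $\mathcal V$ proper as $y\to-\infty$), and finally a small $c>0$, gives $\mathcal L\mathcal V\le -c\mathcal V+b$ globally on $\mathbb R^{d+1}$; this inequality simultaneously rules out explosion, so the process is well defined for all $t\ge0$.

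Combining (a) and (b), \cite{MT1993a} yields that $(\ln V,\mathbf{X})$ is positive Harris recurrent with a unique invariant probability measure $\Pi$ on $\mathbb R^{d+1}$, and ergodic (indeed geometrically). Integrating the drift inequality against $\Pi$ gives $\int\mathcal V\,\mathrm d\Pi\le b/c<\infty$, in particular $\int|\mathbf{x}|_2^2\,\Pi(\mathrm dy,\mathrm d\mathbf{x})<\infty$. Then for any measurable $f$ with $|f(\mathbf{x})|\le|\mathbf{x}|_2^2$ we have $|f|\le\mathcal V\in L^{1}(\Pi)$, so the ergodic theorem for positive Harris processes \cite{MT1993a} applied to $g(y,\mathbf{x}):=f(\mathbf{x})$ gives $\frac1T\int_0^T f(\mathbf{X}(t))\,\mathrm dt\to\int g\,\mathrm d\Pi=\int f\,\mathrm d\pi$ almost surely, where $\pi$ is the $\mathbf{X}$‑marginal of $\Pi$; this is~\eqref{eq:SLLN-ct}.

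I expect the main obstacle to be the construction and verification of $\mathcal V$ — precisely, arranging that the $V^2$‑order growth entering the $\mathbf{X}$‑drift through $\mathbf{c}V$ and the $\mathbf{X}$‑diffusion through $Q_i(V)\le c_i(V^2+1)$ is absorbed by the negative drift of a function of $\ln V$. This is exactly where the Ornstein--Uhlenbeck structure with $\beta>0$ is essential, and why the power in $e^{2\ln V}$ must match the square growth rather than being smaller. A secondary technical point is carrying out the irreducibility/petiteness step at the level of generality of Assumption~\ref{asmp:mgle}, which forces one to invoke non‑degeneracy of the noise and the fact that $V$ stays strictly positive.
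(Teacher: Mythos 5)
Your proof is essentially correct, but it takes a genuinely different route from the paper. The paper does not use a Foster--Lyapunov drift condition at all: it writes the explicit variation-of-constants solution $\mathbf{X}(t)=e^{-\mathbf{B}t}\mathbf{X}(0)+\int_0^te^{\mathbf{B}(s-t)}(\mathbf{a}+\mathbf{c}V(s))\,\mathrm{d}s+\int_0^te^{\mathbf{B}(s-t)}\,\mathrm{d}\mathbf{M}(s)$, bounds each piece using $|\!|e^{-\mathbf{B}u}|\!|\le k_1e^{-k_0u}$, the uniform boundedness of all moments $\mathbb E[V^u(t)]$ of the log-OU process, and a quadratic-variation estimate for the matrix stochastic integral (Lemma~\ref{lemma:matrix-mgle}), to conclude $\sup_{t\ge0}\mathbb E[|\mathbf{X}(t)|_2^2]<\infty$; tightness plus non-degeneracy of the diffusion then yields ergodicity and the SLLN via the Appendix lemmas, which themselves rest on the same Meyn--Tweedie machinery you invoke. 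Your approach instead establishes $\mathcal L\mathcal V\le-c\mathcal V+b$ with $\mathcal V(y,\mathbf{x})=1+\mathbf{x}^T\mathbf{P}\mathbf{x}+\lambda_1e^{2y}+\lambda_2y^2$, $\mathbf{P}$ solving the matrix Lyapunov equation; the computation is sound (the observation that $-\beta\ln v$ eventually dominates any constant so that $\mathcal L(e^{2y})\le-\gamma e^{2y}+C$ for arbitrary $\gamma$ is exactly the right mechanism for absorbing the $O(v^2)$ terms; your $\tfrac12\Sigma_{00}$ should be $\Sigma_{00}$ in $\mathcal L(e^{2y})$, which is harmless). What your route buys is geometric ergodicity, which the paper explicitly defers to future work, and the moment bound $\int\mathcal V\,\mathrm d\Pi\le b/c$ without appealing to weak convergence. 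What the paper's route buys is that the tightness step works for any local martingale satisfying Assumption~\ref{asmp:mgle} without requiring a well-defined generator for the joint process, whereas your generator calculation implicitly needs $\mathbf{M}$ to be of the Markovian form $\int\sigma(V)\,\mathrm dW$ as in~\eqref{eq:SDE}. Both arguments share the same unstated non-degeneracy hypothesis (positive-definite $\Sigma$) needed for irreducibility, which you correctly flag and the paper buries in the hypotheses of Lemma~\ref{lemma:ergodic-continuous}.
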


In particular, consider the model~\eqref{eq:diag},~\eqref{eq:OU-SV},~\eqref{eq:log-OU}: multivariate Ornstein-Uhlenbeck process with stochastic volatility. Under Assumptions~\ref{asmp:const} and~\ref{asmp:beta}, the process is ergodic, and satisfies the Strong Law of Large Numbers~\eqref{eq:SLLN-ct}. 

\subsection{Proof of Theorem~\ref{thm:cont-time}} {\it Step 1.} Under Assumption~\ref{asmp:beta}, for any $u \in \mathbb R$, 
\begin{equation}
\label{eq:bdd-moment}
\sup_{t \ge 0}\mathbb E[V^u(t)] < \infty.
\end{equation}
Indeed, the process $\ln V(\cdot)$ is an Ornstein-Uhlenbeck process, with stationary distribution $\mathcal N(\mu_{\infty}, \rho^2_{\infty})$. For any $t \ge 0$, the distribution of $\ln V(t)$ is also Gaussian: $\mathcal N(\mu_t, \sigma^2_t)$, where $\mu_t \to \mu_{\infty}$, $\sigma_t \to \sigma_{\infty}$, $t \to \infty$. This is taken from \cite[p.106]{Handbook} or \cite[p.358]{KSBook}. Using the moment generating function of a normal distribution:
\begin{equation}
\label{eq:V-conv}
\mathbb E[V^u(t)] = \mathbb E[\exp(u\ln V(t))] = \exp\left(u\mu_t + 0.5\sigma_t^2u^2\right) \to \exp\left(u\mu_{\infty} + 0.5\sigma_{\infty}^2u^2)\right). 
\end{equation}

{\it Step 2.} We can explicitly solve this linear multivariate stochastic differential equation with respect to $X(t)$ if we assume we already know $V(t)$. We use the matrix exponent $e^{\mathbf{B}t}$. Its properties can be found in any standard textbook, for example \cite{Hall}. See also solution of linear stochastic differential equations in \cite[Chapter 5, Section 6]{KSBook}. We can write 
\begin{equation}
\label{eq:complete}
\mathbf{X}(t) = e^{-\mathbf{B}t}\mathbf{X}(0) + \int_0^te^{\mathbf{B}(s-t)}(\mathbf{a} + \mathbf{c}V(t))\,\mathrm{d}s + \int_0^te^{\mathbf{B}(s-t)}\,\mathrm{d}\mathbf{M}(s).
\end{equation}
We can write~\eqref{eq:complete} as the sum of three terms:
\begin{align}
\label{eq:decomp}
\begin{split}
\mathbf{X}(t) &:= \mathcal A_1(t) + \mathcal A_2(t) +\mathcal A_3(t);\\
\mathcal A_1(t) &:= e^{-\mathbf{B}t}\mathbf{X}(0) + \int_0^te^{\mathbf{B}(s-t)}\mathbf{a}\,\mathrm{d}s;\\
\mathcal A_2(t) &:= \int_0^te^{\mathbf{B}(s-t)}\mathbf{c} V(s)\,\mathrm{d}s;\\
\mathcal A_3(t) &:= \int_0^te^{\mathbf{B}(s-t)}\,\mathrm{d}\mathbf{M}(s).
\end{split}
\end{align}

{\it Step 3.} By \cite[Theorem 2, Section 1.9]{ODE}, there exist constants $k_0, k_1 > 0$ such that
\begin{equation}
\label{eq:exp-ineq}
|\!|e^{-\mathbf{B}u}|\!|_2 \le k_1e^{-k_0u},\quad u \ge 0.
\end{equation}
Therefore, the $\mathcal A_1(t)$ from~\eqref{eq:decomp} is bounded almost surely by a certain constant $C_1$:
\begin{equation}
\label{eq:A1}
\sup\limits_{t \ge 0}|\mathcal A_1(t)|_2 \le C_1.
\end{equation}

{\it Step 4.} Next, we can estimate the norm of the random function with finite variance from the right-hand side of~\eqref{eq:complete}:
Indeed, apply this norm and use~\eqref{eq:exp-ineq}:
\begin{equation}
\label{eq:A-2}
|\mathcal A_2(t)|_2 \le \int_0^t|\!|e^{\mathbf{B}(s-t)}|\!|_2\cdot |\mathbf{c}|_2 V(s)\,\mathrm{d}s \le \int_0^tk_1|\mathbf{c}|_2e^{k_0(s-t)}V(s)\,\mathrm{d}s.
\end{equation}
Apply the expected value to~\eqref{eq:A-2}: The right-hand side of~\eqref{eq:A-2} is bounded over $t \ge 0$. Thus 
\begin{equation}
\label{eq:A2}
\sup\limits_{t \ge 0}\mathbb E|\mathcal A_2(t)|^2_2 < \infty.
\end{equation}

{\it Step 5.} Finally, let us estimate the second moment of the local martingale part $\mathcal A_3(t)$ from~\eqref{eq:decomp}. We use Lemma~\ref{lemma:matrix-mgle} with $\mathcal G_i(s) = Q_i(V(s))$ and $F(s) = e^{\mathbf{B}(s-t)}$. The martingale $\mathcal M$ has square-integrable components, and from Assumption~\ref{asmp:mgle} and results of Step 1 of this proof: Using estimates $\mathcal G_i(s) \le a_0V^2(s) + a_1$, we get:
\begin{equation}
\label{eq:A3}
\mathbb E\left[|\mathcal A_3(t)|_2^2\right] \le d^{3/2}\int_0^tk_1e^{2k_0(s-t)}\left(a_0k_V + a_1\right)\,\mathrm{d}s \le k_1(a_0k_V + a_1)\int_0^{\infty}e^{-2k_0s}\,\mathrm{d}s.
\end{equation}
The right-hand side of~\eqref{eq:A3} is bounded over $t \ge 0$. 

{\it Step 6.} Combining~\eqref{eq:A1},~\eqref{eq:A2},~\eqref{eq:A3}, 
we get that $\mathbf{X} := \mathcal A_1 + \mathcal A_2 + \mathcal A_3$ is tight, since 
\begin{equation}
\label{eq:finite-second}
\sup\limits_{t \ge 0}\mathbb E\left[|\mathbf{X}(t)|^2_2\right] < \infty.
\end{equation}
Using~\eqref{eq:bdd-moment} with $u = \pm 2$ and~\eqref{eq:finite-second}, we see that the process $(\ln V, \mathbf{X})$ is tight. From Lemma~\ref{lemma:ergodic-continuous} in the Appendix, we complete the proof of ergodicity. 

{\it Step 7.} To show the Strong Law of Large Numbers, we need to apply Lemma~\ref{lemma:SLLN} in the Appendix to $U(v, \mathbf{x}) = e^{2|v|} + |\mathbf{x}|_2^2$. To check~\eqref{eq:BDD}, we apply again~\eqref{eq:bdd-moment} with $u = \pm 2$ and~\eqref{eq:finite-second}. Therefore, for functions $f : \mathbb R^{d+1} \to \mathbb R$ such that $|f(v, \mathbf{x})| \le U(v, \mathbf{x})$ for all $v \in \mathbb R$ and $\mathbf{x} \in \mathbb R^d$. In particular, consider functions $f : \mathbb R^d \to \mathbb R$ independent of $v$ (that is, $f$ does not contain the VIX). Then we only need $|f(\mathbf{x})| \le |\mathbf{x}|_2^2$ for all $\mathbf{x} \in \mathbb R^d$.

\section{Total Zero-Coupon Bond Returns}

\subsection{Yield to maturity and coupons} We choose zero-coupon Treasury bonds (instead of the classic coupon-paying bonds). They pay only principal at maturity, and do not pay any coupons. This is in stark contrast with classic Treasury bonds which pay semiannual coupons, and return the principal at maturity. Let us explain the motivation: It is much easier to compute total returns for zero-coupon Treasury bonds. 

Unlike even the safest corporate bonds, Treasury bonds do not have any {\it default risk} (missing or not paying in full the coupon and principal payments). However, Treasury bonds do have {\it interest rate risk:} When the overall level of interest rates rises, bond prices fall to make these bonds competitive with newly issued bonds at these higher rates. To quantify this, the following {\it yield to maturity} (YTM) concept was developed, which summarizes relationships between current bond price, future coupon and principal payments. This is the rate $r$ of return which we get if we reinvest all payments at this same rate: the only solution of 
\begin{equation}
\label{eq:YTM}
P_0 = \sum\limits_{k=1}^n\frac{P_k}{(1+r)^{t_k}}
\end{equation}
where $P_0$ is the current price (at time $t = 0$), and $P_k$ is a future payment at time $t_k$. This value $r$ is quoted most frequently in financial datasets and popular media, and is commonly known as the {\it bond rate}. The equation~\eqref{eq:YTM} gives us dependence of $P_0$ upon $r$: the derivative $\partial P_0/\partial r$ is negative, and its absolute value is called the {\it duration} of this bond. 

\subsection{Motivation for using zero-coupon bonds} A naive belief that, say, a 4\% Treasury bond returns 4\% over the next year is wrong. The total returns are composed of coupon/principal payments and price changes. More precisely, if $S(t)$ is the price at end-of-month $t$ and $D(t)$ is the payments during month $t$, then total (log) returns this month are computed as
\begin{equation}
\label{eq:TR}
Q(t) = \ln\frac{S(t)+D(t)}{S(t-1)}.
\end{equation}
The price changes, as follows from~\eqref{eq:YTM}, depend on the rate changes. It is not immediately clear how to model these dependencies using simple equations. We could take an exchange-traded fund (ETF) of Treasury bonds (or of a certain sector, for example long or short Treasuries) and model its total returns as a linear regression upon past rates and rate changes. In this article, we use an alternative: Zero-coupon Treasury bonds ({\it zeros}), which pay only principal $P$ at maturity $T$ and no coupons. The equation~\eqref{eq:YTM} then much simplifies:
\begin{equation}
\label{eq:zeros}
P_0 = \frac{P}{(1+r)^T}.
\end{equation}
Then $D(t) = 0$, and~\eqref{eq:TR} simplifies to $Q(t) = \ln S(t) - \ln S(t-1)$. Using~\eqref{eq:zeros} makes total returns much easier to compute for zero-coupon Treasury bonds (zeros). One can infer their prices using the Duffie-Kan three-factor model. This was done in \cite{Kim-Wright} and the resulting data is available daily from January 1990 for zeros of maturities from 1 to 10 years. 

\subsection{Bond returns in discrete time} Consider a zero-coupon rate $R_k$ corresponding to $k$-year maturity. Denote by $c_{ik}$ the loading factor for $R_k$ corresponding to the $i$th principal component (PC). The first three PC capture almost all variance, so we can approximate 
$$
R_k(t) = c_{1k}P_1(t) + c_{2k}P_2(t) + c_{3k}P_3(t).
$$
However, for the sake of generality, assume we have $d$ PC:
$$
R_k(t) = \sum\limits_{i=1}^{d}c_{ik}P_i(t).
$$
We have monthly data $t$ but annual maturity. That is, we have data on zero-coupon bonds for January 1990, February 1990, \ldots, but only for maturities of 12 months, 24 months, \ldots, and not for 13 months or 18 months. To carry out analysis, we must have data for fractional-year maturity. The easiest way is to interpolate $c_{ik}$ for monthly $k$ and define new rates with fractional-year maturity $\rho_l(t)$ for $l$ months. These loading factors with their linear interpolations are already shown in Figure~\ref{fig:PCA}.  Then $P_k(t) = \rho_{12k}(t)$. We let
\begin{equation}
\label{eq:PC-rho}
\rho_l(t) = \sum\limits_{i=1}^{d}\gamma_{il}P_i(t),\quad \gamma_{i, 12l} = c_{il}.
\end{equation}
From~\eqref{eq:zeros}, the price of a zero-coupon bond of maturity $l$ at end-of-month $t-1$ is $S_l(t-1) = (1 + \rho_l(t-1))^{-l}$. Next month, at end-of-month $t$ is $S_l(t) = (1 + \rho_l(t))^{-l+1}$. This bond does not pay any coupons, thus its total returns are equal to price returns, which are:
$$
Q_l(t) = \ln\frac{S_l(t)}{S_l(t-1)} = -(l-1)\ln(1 + \rho_{l-1}(t)) + l\ln(1+\rho_l(t-1)).
$$
We can approximate $\ln(1+x) \approx x$ for small $x \approx 0$. Since all rates are small (between 0 and 10\%), the returns are approximately 
\begin{equation}
\label{eq:approx-returns}
Q_l(t) \approx Q_l^*(t) = -(l-1)\rho_{l-1}(t) + l\rho_l(t-1).
\end{equation}
Plugging the representation~\eqref{eq:PC-rho} into~\eqref{eq:approx-returns}, and letting $\Gamma_{il} = l\gamma_{il}$, we get:
\begin{align}
\label{eq:approximation}
\begin{split}
Q_l^*(t) := \sum\limits_{i=1}^{d}\left[-\Gamma_{i,l-1}P_i(t-1) + \Gamma_{il}P_i(t)\right],\quad l = 1, \ldots, d.
\end{split}
\end{align}
Define by $(P_1(\infty), \ldots, P_d(\infty))$ random variables having the stationary distribution which exists by Theorem~\ref{thm:cont-time}. This linear representation~\eqref{eq:approximation} allows to quickly  prove the Strong Law of Large Numbers for zero-coupon Treasury bond returns. 

\begin{theorem} Assume $(P_1, \ldots, P_d) = \mathbf{X}$ and $V$ are modeled as~\eqref{eq:AR-SV},~\eqref{eq:log-AR},~\eqref{eq:diag}. Under Assumptions ~\ref{asmp:const},~\ref{asmp:mgle},~\ref{asmp:beta}, total returns $Q^*_l$ for maturity $l$ in~\eqref{eq:approximation} satisfy 
\begin{equation}
\label{eq:conv}
\frac{Q^*_l(1) + \ldots + Q^*_l(T)}{T} \to \sum\limits_{i=1}^d(\Gamma_{il} - \Gamma_{i, l-1})\cdot \mathbb E[P_i(\infty)], \quad T \to \infty\quad \mbox{almost surely.}
\end{equation}
\label{thm:returns-discrete}
\end{theorem}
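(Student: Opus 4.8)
The plan is to reduce the claimed convergence \eqref{eq:conv} directly to the Strong Law of Large Numbers for the principal components, i.e.\ to Theorem~\ref{thm:LLN}, whose hypotheses are in force under the assumptions of the statement. Since by \eqref{eq:approximation} each $Q^*_l(t)$ is a fixed (deterministic) linear combination of the coordinates $P_i(t)$ and $P_i(t-1)$, summing over $t = 1, \ldots, T$ and dividing by $T$ turns the assertion into a statement about the Cesàro averages of the $P_i$.

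Concretely, I would first write
\[
\frac1T\sum_{t=1}^T Q^*_l(t) = \sum_{i=1}^d\left[-\Gamma_{i,l-1}\cdot\frac1T\sum_{t=1}^T P_i(t-1) + \Gamma_{il}\cdot\frac1T\sum_{t=1}^T P_i(t)\right],
\]
and then observe that $\sum_{t=1}^T P_i(t-1) = \sum_{t=1}^T P_i(t) + P_i(0) - P_i(T)$, so the two inner averages differ only by the boundary term $T^{-1}(P_i(0)-P_i(T))$. By Theorem~\ref{thm:LLN}, $T^{-1}\sum_{t=1}^T P_i(t) \to \mathbb E[P_i(\infty)]$ almost surely, with a finite limit; and convergence of these Cesàro means forces $P_i(T)/T \to 0$ almost surely (express $P_i(T)/T$ through the difference of two consecutive Cesàro means, up to the factor $(T-1)/T$). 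Hence the boundary correction is negligible and both averages converge to the same almost sure limit $\mathbb E[P_i(\infty)]$.

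Passing to the limit term by term in the finite sum over $i = 1, \ldots, d$ then yields
\[
\frac1T\sum_{t=1}^T Q^*_l(t) \to \sum_{i=1}^d\left(\Gamma_{il}-\Gamma_{i,l-1}\right)\mathbb E[P_i(\infty)],
\]
which is exactly \eqref{eq:conv}. There is essentially no analytic obstacle here; the only point deserving care — and the step I would flag as the main one — is the harmlessness of the index shift $t-1\mapsto t$, namely that $P_i(T)/T\to0$ almost surely, together with a quick check that Theorem~\ref{thm:LLN} is genuinely applicable in this setting, so that the limiting constants $\mathbb E[P_i(\infty)]$ exist and are finite.
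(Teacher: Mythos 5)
Your proposal is correct and follows essentially the same route as the paper: expand the Ces\`aro average of $Q^*_l$ via \eqref{eq:approximation} into Ces\`aro averages of the $P_i$ and invoke Theorem~\ref{thm:LLN}. You are in fact slightly more careful than the paper, which dismisses the index shift $t-1\mapsto t$ without comment, whereas you correctly note that the boundary term $T^{-1}(P_i(0)-P_i(T))$ vanishes almost surely once the Ces\`aro means converge.
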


\begin{proof}
From~\eqref{eq:approximation}, we get:
$$
\frac{Q^*_l(1) + \ldots + Q^*_l(T)}{T} = \sum\limits_{i=1}^d\Gamma_{il}\cdot\frac1T\sum\limits_{t=1}^TP_i(t) - \sum\limits_{i=1}^d\Gamma_{i,l-1}\cdot\frac1T\sum\limits_{t=1}^TP_i(t-1).
$$
The rest follows from Theorem~\ref{thm:LLN}. 
\end{proof}

\subsection{Bond returns in continuous time} We replace total returns $Q^*_l(t)$ of the zero-coupon Treasury bonds in~\eqref{eq:approximation} with maturity $l$ by the It\^o differential $\mathrm{d}\ln\mathcal W_l(t)$. Here, $\mathcal W_l = (\mathcal W_l(t),\, t \ge 0)$ is the {\it wealth process:} amount of wealth at time $t$ if $\mathcal W_l(0) = 1$ is invested in zero-coupon Treasury bonds with maturity $l$. Our goal is to write a stochastic equation for $\mathcal W_l$. We rewrite the right-hand side of~\eqref{eq:approximation} as
$$
\sum\limits_{i=1}^d\Gamma_{il}(P_i(t) - P_i(t-1)) + \sum\limits_{i=1}^d(\Gamma_{il} - \Gamma_{i, l-1})P_i(t-1).
$$
Switching to continuous time, we change $\Gamma_{il}$ to $\Gamma_i(l)$, because now $l$ is a continuous positive parameter. Next, we change $\Gamma_{il} - \Gamma_{i, l-1}$ to $\Gamma_i'(l)$ (derivative with respect to $l$). Further, we switch from $P_i(t)$ to $P_i(t)\,\mathrm{d}t$, and from $P_i(t) - P_i(t-1)$ to $P_i(t)\,\mathrm{d}t$. All in all, we rewrite \eqref{eq:approximation} in continuous time as
\begin{equation}
\label{eq:ct-returns}
\mathrm{d}\ln \mathcal W_l(t) = \sum\limits_{i=1}^d\Gamma_i'(l)P_i(t)\,\mathrm{d}t + \sum\limits_{i=1}^d\Gamma_i(l)\mathrm{d}P_i(t).
\end{equation}

\begin{asmp}
\label{asmp:zero}
Each deterministic function $\Gamma_i : [0, \infty) \to \mathbb R$ is continuously differentiable. 
\end{asmp}

\begin{theorem} Assume $(P_1, \ldots, P_d) = \mathbf{X}$ and $V$ are modeled as~\eqref{eq:diag},~\eqref{eq:OU-SV},~\eqref{eq:log-OU}. Under Assumptions ~\ref{asmp:const},~\ref{asmp:mgle},~\ref{asmp:beta},~\ref{asmp:zero}, as $T \to \infty$, we have convergence in probability
$$
\frac1T\ln\mathcal W_l(T) \to \sum\limits_{i=1}^p\Gamma'_i(l)\cdot\mathbb E[P_i(\infty)],
$$
where $P_i(\infty)$ corresponds to the stationary distribution.
\label{thm:returns-continuous} 
\end{theorem}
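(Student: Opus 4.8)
The plan is to integrate the return equation~\eqref{eq:ct-returns} directly. Since each $\Gamma_i(l)$ is a constant for fixed $l$, the stochastic piece telescopes, and using $\mathcal W_l(0) = 1$ (so $\ln\mathcal W_l(0) = 0$) we obtain
\begin{equation*}
\ln\mathcal W_l(T) = \sum_{i=1}^d\Gamma_i'(l)\int_0^TP_i(t)\,\mathrm{d}t + \sum_{i=1}^d\Gamma_i(l)\bigl(P_i(T) - P_i(0)\bigr).
\end{equation*}
Dividing by $T$, the theorem reduces to showing, for each $i = 1, \ldots, d$, that (i) $T^{-1}\int_0^TP_i(t)\,\mathrm{d}t \to \mathbb E[P_i(\infty)]$ almost surely, and (ii) $T^{-1}(P_i(T) - P_i(0)) \to 0$ in probability; a finite sum of almost surely convergent terms plus a finite sum of terms converging in probability then gives the asserted convergence in probability.

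For (i), I would apply the Strong Law of Large Numbers of Theorem~\ref{thm:cont-time} to the coordinate function $f(\mathbf x) = x_i$, whose integral against the stationary measure $\pi$ is $\mathbb E[P_i(\infty)]$ (finite, since the uniform bound~\eqref{eq:finite-second} passes to the stationary law by Fatou's lemma). The only point of care is that the SLLN in Theorem~\ref{thm:cont-time} is phrased for $f$ with $|f(\mathbf x)| \le |\mathbf x|_2^2$, which a linear function violates near the origin; but Step~7 of that proof in fact establishes the conclusion for all $f$ dominated by $U(v, \mathbf x) = e^{2|v|} + |\mathbf x|_2^2$, and $|x_i| \le \tfrac12(1 + |\mathbf x|_2^2) \le e^{2|v|} + |\mathbf x|_2^2$, so $f(\mathbf x) = x_i$ is admissible.

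For (ii), the term $P_i(0)/T$ vanishes deterministically, and for $P_i(T)/T$ I would invoke the uniform second-moment bound~\eqref{eq:finite-second} from the proof of Theorem~\ref{thm:cont-time}: $\mathbb E[P_i(T)^2] \le \mathbb E[|\mathbf X(T)|_2^2] \le C$ with $C$ independent of $T$, hence $\mathbb E[(P_i(T)/T)^2] \le C/T^2 \to 0$, so $P_i(T)/T \to 0$ in $L^2$ and therefore in probability. Combining (i) and (ii) in the integrated form of~\eqref{eq:ct-returns} displayed above yields $T^{-1}\ln\mathcal W_l(T) \to \sum_{i=1}^d\Gamma_i'(l)\,\mathbb E[P_i(\infty)]$ in probability.

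There is no serious obstacle here: the statement is essentially a corollary of Theorem~\ref{thm:cont-time}. The only mild technicalities are the growth-condition bookkeeping needed to feed a linear test function into the SLLN, and the fact that the convergence is only in probability --- this is forced by the boundary increment $P_i(T)/T$, which we control in $L^2$ but not pathwise (an almost sure statement would require, for instance, a Borel--Cantelli argument along integer times together with a modulus-of-continuity estimate for $\mathbf X$ on the intervening intervals, which is not needed for the claimed result).
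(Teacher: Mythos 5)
Your proposal is correct and follows essentially the same route as the paper: integrate~\eqref{eq:ct-returns}, apply the Law of Large Numbers of Theorem~\ref{thm:cont-time} with $f(\mathbf{x}) = x_i$, and kill the boundary term $(P_i(T)-P_i(0))/T$ in probability via the uniform second-moment bound~\eqref{eq:finite-second} and the Markov/Chebyshev inequality. In fact you are slightly more careful than the paper on one point --- noting that a linear $f$ violates the literal hypothesis $|f(\mathbf{x})| \le |\mathbf{x}|_2^2$ near the origin and repairing this via the domination $|x_i| \le e^{2|v|} + |\mathbf{x}|_2^2$ from Step~7 of that proof --- which the paper glosses over.
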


\begin{proof}
Apply Theorem~\ref{thm:cont-time} to $f(\mathbf{x}) = x_i$ for each $i = 1, \ldots, d$. Note that $\Gamma_i(l)$ and $\Gamma_i'(l)$ are simply deterministic constants. Therefore, almost surely (and in probability)
$$
\frac1T\int_0^TP_i(t)\,\mathrm{d}t \to \mathbb E[P_i(\infty)],\quad T \to \infty.
$$
Next, we need to show another convergence in probability:
$$
\frac1T\int_0^T\,\mathrm{d}P_i(t) \equiv \frac{P_i(T) - P_i(0)}{T} \to 0,\quad T \to \infty.
$$
But this follows from~\eqref{eq:finite-second} and the classic Markov inequality, \cite[Proposition 5.1.1]{Look}. 
\end{proof}

\subsection{Term premia and CAPM} Assume for once that we have only one principal components: level. This is reasonable, since it explains most of the variance, see subsection 2.7. Let us operate in continuous time. (We leave the reader to check we have similar results for discrete time.) Rewrite~\eqref{eq:ct-returns} as
\begin{equation}
\label{eq:ct-ret-pc1}
\mathrm{d}\ln \mathcal W_l(t) = \Gamma_1'(l)P_1(t)\,\mathrm{d}t + \Gamma_1(l)\mathrm{d}P_1(t).
\end{equation}
As discussed in the same subsection and shown on Figure~\ref{fig:PCA}, the level factor is close to constant: $\gamma_{1l} = 1$ (without loss of generality) and $\Gamma_1(l) = l$. Plugging this into~\eqref{eq:ct-ret-pc1}:
\begin{equation}
\label{eq:ct-ret-l}
\mathrm{d}\ln \mathcal W_l(t) = P_1(t)\,\mathrm{d}t + l\mathrm{d}P_1(t).
\end{equation}
For example, letting $l = 0$, we get:
\begin{equation}
\label{eq:ct-ret-0}
\mathrm{d}\ln \mathcal W_0(t) = P_1(t)\,\mathrm{d}t.
\end{equation}
Here, $\mathcal W_0(t)$ as wealth invested in very short-term Treasury bonds, which are risk-free. They do not have any interest rate risk. These are usually taken to be the benchmark for risk-free returns. Subtracting~\eqref{eq:ct-ret-0} from~\eqref{eq:ct-ret-l}, we get the formula for {\it term premium:} Extra returns of long-term Treasuries which are compensation for interest rate risk:
\begin{equation}
\label{eq:term-premia}
\mathrm{d}\ln\frac{\mathcal W_l(t)}{\mathcal W_0(t)} = \mathrm{d}\ln\mathcal W_l(t) - \mathrm{d}\ln\mathcal W_0(t) = l\cdot \mathrm{d}P_1(t).
\end{equation}
The benchmark can be taken as bonds of fixed maturity $l_0$, for example 10 years (classic long-term Treasury benchmark). Then from~\eqref{eq:term-premia}, we have:
\begin{equation}
\label{eq:CAPM}
\mathrm{d}\ln\frac{\mathcal W_l(t)}{\mathcal W_0(t)} = \frac{l}{l_0}\cdot\mathrm{d}\ln\frac{\mathcal W_{l_0}(t)}{\mathcal W_0(t)}.
\end{equation}
We can interpret~\eqref{eq:CAPM} as follows: Only one factor, namely the maturity $l$, fully explains the term premium. This is similar to the classic Capital Asset Pricing Model (CAPM), where the equity premium of a stock portfolio (its total returns minus risk-free returns) equals the equity premium of a benchmark (usually S\&P 500) times $\beta$. For background on CAPM, see \cite[Chapter 6, Section 3]{Ang}. According to CAPM, only one factor matters for stock portfolio returns: {\it market exposure} $\beta$. Other risk might be diversified away. 

CAPM remains a useful benchmark for portfolio management. However, empirical evidence works again CAPM, \cite[Chapter 6, Section 5]{Ang}. There are other important factors. Most commonly cited are {\it size} and {\it value}, see \cite[Chapter 7, Section 3]{Ang}. In the Treasury bond market, we can add more factors by including more PC, such as slope and curvature. Adding the level and the slope, assume the loading factor function is linear: $\gamma_{2l} = l + c$ and $\Gamma_2(l) = l^2 + cl$. Then~\eqref{eq:ct-returns} becomes
$$
\mathrm{d}\ln\mathcal W_l(t) = P_1(t)\,\mathrm{d}t + l\,\mathrm{d}P_1(t) + (2l+c)\,P_2(t)\,\mathrm{d}t + (l^2 + cl)\,\mathrm{d}P_2(l). 
$$
Letting $l = 0$ and subtracting, we can compute the term premium:
$$
\mathrm{d}\ln\frac{\mathcal W_l(t)}{\mathcal W_0(t)} = l\,\mathrm{d}P_1(t) + 2l\,P_2(t)\,\mathrm{d}t + (l^2+cl)\,\mathrm{d}P_2(t).
$$
Here, term premium is not proportional to $l$: Note the term $l^2$ in the right-hand side. 

\section{Conclusion and Discussion}

We introduced in this article a novel multivariate autoregressive stochastic volatility model, which is a generalization of existing models. This model is an alternative to ARMA-GARCH models, with two series of innovations instead of one. We fit this model for a few first PC (level, slope, and curvature) for zero-coupon Treasury bonds, with volatility directly observed as VIX. We also prove long-term stability for our model and for its continuous-time version, and the Strong Law of Large Numbers. Finally, for these zero-coupon bonds, we can easily model total returns, and prove long-term stability for them as well. Our main observations:

\begin{itemize}
\item The volatility term is directly observed here, unlike GARCH or other similar models, where volatility is hidden and must be inferred from the main data series.
\item Stock market implied volatility is useful for the bond market. This is a remarkable connection between the American stock and bond markets.
\item Including VIX is useful for slope (the second PC, showing how long-term rates differ from short-term rates) but not for other PC (most importantly, for the level, which shows how rates move together). 
\end{itemize}

Further research might include extension to ARMA model, used for the main data series, and stochastic volatility is used in modeling noise. We might also use a more complicated model, such as ARMA, rather than autoregression of order 1, for log volatility. 

Innovations might be Gaussian, Laplace, variance-gamma (a generalization of the Laplace distribution with MGF $e^{ht}(1 - ct - at^2)^{-b}$), or having Student $t$-distribution. 

We can assume volatility is unobservable. Then we would need to infer it using hidden Markov models. This might be done in frequentist of Bayesian setting. This research will probably be quite computationally intensive. 

One can state and prove exponential convergence to the stationary distribution in the long run for our model or its generalizations. It would be great to find or at least estimate this exponential convergence rate, similarly to \cite[Chapter 19]{Stability}. 

We might consider daily or weekly instead of monthly data. There is much more data, but it is likely harder to model. Conversely, switching to annual data leaves us less data, but likely easier to model, using simpler models. Innovations will be likely closer to IID Gaussian for annual rather than monthly data.

Future research might use other bond market data: Classic coupon Treasury bonds; corporate bonds with various ratings; and international bonds. As discussed earlier, it is much harder to model coupon-paying Treasury bond returns. But one could consider exchange-traded funds (ETF) tracking Treasury bonds.

A proof of the Strong Law of Large Numbers: convergence almost surely (instead of convergence in probability) in Theorem~\ref{thm:returns-continuous} seems to require more effort, which we leave for future research. Note that in other statements of this article, the Law of Large Numbers is shown in the strong version. 

In subsection 2.11, eigenvalues of $\mathbf{B}$ for the bivariate and the trivariate models are between 0 and 1, but very close to 1. It might be interesting to apply unit root tests to these multivariate models, and check whether we can statistically verify that all eigenvalues of $\mathbf{B}$ are inside the unit circle. If we fail to reject the unit root hypothesis: some eigenvalues of $\mathbf{B}$ lie on the inut circle, then we do not have stationarity and ergodicity. 

\section*{Appendix} 

Here we state and prove a few technical lemmas which seem quite clear to us, but we could not find rigorous complete proofs in existing literature. 

\begin{lemma} Consider a discrete-time Markov process $\mathbf{X} = (\mathbf{X}(t), t = 0, 1, 2, \ldots)$ on $\mathbb R^d$ with transition density $p : \mathbb R^{2d} \to (0, \infty)$: For any $B \subseteq \mathbb R^d$ and $t > 0$, $\mathbf{x} \in \mathbb R^d$, 
$$
\mathbb P(\mathbf{X}(t) \in B\mid \mathbf{X}(t-1) = \mathbf{x}) = \int_Bp(\mathbf{x}, \mathbf{y})\,\mathrm{d}\mathbf{y}.
$$
If this process has a stationary distribution $\pi$, it is ergodic. 
\label{lemma:ergodic-discrete}
\end{lemma}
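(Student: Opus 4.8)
The plan is to prove ergodicity of a discrete-time Markov chain on $\mathbb{R}^d$ that possesses a stationary distribution $\pi$ and a transition density $p(\mathbf{x},\mathbf{y})$ that is strictly positive everywhere. The standard route is through the theory of $\psi$-irreducible Markov chains as developed in \cite{Stability}: I would verify that the chain is $\psi$-irreducible, aperiodic, and Harris recurrent, and then invoke the convergence theorem (e.g. \cite[Theorem 13.3.3]{Stability}) which gives convergence in total variation to $\pi$ from every initial point, together with uniqueness of the stationary distribution.

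First I would establish $\varphi$-irreducibility with $\varphi = $ Lebesgue measure $\mathrm{Leb}$ on $\mathbb{R}^d$: if $B$ has positive Lebesgue measure, then for every $\mathbf{x}$, $\mathbb{P}(\mathbf{X}(1)\in B\mid \mathbf{X}(0)=\mathbf{x}) = \int_B p(\mathbf{x},\mathbf{y})\,\mathrm{d}\mathbf{y} > 0$ because the integrand is strictly positive. This simultaneously shows that every point is reachable in one step and that every compact set with nonempty interior is a small set: indeed, fixing a closed ball $\mathcal{K}$, continuity (or at least measurability plus positivity) of $p$ lets one bound $p(\mathbf{x},\mathbf{y})\ge \delta\, \mathbf{1}_{\mathcal{K}_0}(\mathbf{y})$ for $\mathbf{x}\in\mathcal{K}$ on some smaller ball $\mathcal{K}_0$, giving the minorization $\mathbb{P}(\mathbf{X}(1)\in\cdot\mid\mathbf{X}(0)=\mathbf{x})\ge \nu(\cdot)$ for a nontrivial measure $\nu$. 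Aperiodicity is then immediate since the same one-step minorization holds. Next I would argue Harris recurrence: a $\varphi$-irreducible chain that admits an invariant \emph{probability} measure is automatically positive Harris recurrent (up to possibly removing a $\pi$-null set), by \cite[Theorem 10.4.4 and Proposition 10.4.2]{Stability} — here the existence of $\pi$ from the hypothesis does the work, so no separate Lyapunov/drift argument is needed. Finally, positive Harris recurrence plus aperiodicity yields, via \cite[Theorem 13.3.3]{Stability}, that $\pi$ is the unique invariant distribution and $\sup_{A}|\mathbb{P}(\mathbf{X}(t)\in A\mid\mathbf{X}(0)=\mathbf{x}) - \pi(A)|\to 0$ for every $\mathbf{x}$, which is exactly the definition of ergodicity used in the paper.

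The main obstacle is the passage from ``everywhere-positive transition density'' to ``the chain is Harris recurrent'' rather than merely recurrent: $\varphi$-irreducibility combined with an invariant probability measure gives recurrence and positivity on a full set, but upgrading to genuine Harris recurrence (so that the null set is empty and convergence holds from \emph{every} starting point, not $\pi$-a.e.) requires a short extra argument. The clean way around this is to note that since $p$ is strictly positive everywhere, \emph{every} point $\mathbf{x}$ leads to every set of positive Lebesgue measure, so the chain is in fact $\mathrm{Leb}$-irreducible in the strong sense and there is no absorbing or transient ``bad'' set to discard; one can then cite \cite[Theorem 9.1.3]{Stability} or argue directly that the maximal irreducibility measure is equivalent to Lebesgue measure and that the whole space is Harris. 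A second, more minor, point requiring care is whether $p$ is merely measurable or has enough regularity to extract a uniform lower bound on a compact set for the small-set property; if only measurability is assumed, one instead builds the small set from a sublevel set of $p$ on which it is bounded below, which always has positive measure by positivity. With these two technical points handled, the rest is a direct citation chain.

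\begin{proof}
Let $\mathrm{Leb}$ denote Lebesgue measure on $\mathbb{R}^d$. Since $p(\mathbf{x},\mathbf{y}) > 0$ for all $\mathbf{x},\mathbf{y}$, for any Borel set $B$ with $\mathrm{Leb}(B) > 0$ and any $\mathbf{x} \in \mathbb{R}^d$ we have $\mathbb{P}(\mathbf{X}(1) \in B \mid \mathbf{X}(0) = \mathbf{x}) = \int_B p(\mathbf{x},\mathbf{y})\,\mathrm{d}\mathbf{y} > 0$. Hence the chain is $\varphi$-irreducible with $\varphi = \mathrm{Leb}$, and its maximal irreducibility measure is equivalent to $\mathrm{Leb}$.

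Fix a closed ball $\mathcal{K}$ of positive radius. By positivity of $p$, for each fixed $\mathbf{x}$ the set $\{\mathbf{y} : p(\mathbf{x},\mathbf{y}) \ge 1\}$ has finite Lebesgue measure complement within any bounded region only in a trivial sense; more usefully, the function $p$ restricted to $\mathcal{K}\times\mathcal{K}$ is positive, so there is a Borel set $\mathcal{K}_0 \subseteq \mathcal{K}$ with $\mathrm{Leb}(\mathcal{K}_0) > 0$ and a constant $\delta > 0$ with $p(\mathbf{x},\mathbf{y}) \ge \delta$ for all $\mathbf{x} \in \mathcal{K}$, $\mathbf{y} \in \mathcal{K}_0$ (take $\mathcal{K}_0$ inside a sublevel set on which $p$ is bounded below, which has positive measure by positivity and $\sigma$-finiteness). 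Setting $\nu(\cdot) := \delta\,\mathrm{Leb}(\cdot \cap \mathcal{K}_0)$, we obtain the minorization
\begin{equation}
\label{eq:minorize}
\mathbb{P}(\mathbf{X}(1) \in A \mid \mathbf{X}(0) = \mathbf{x}) \ge \nu(A), \qquad \mathbf{x} \in \mathcal{K},\ A \subseteq \mathbb{R}^d.
\end{equation}
Thus $\mathcal{K}$ is a small set (in the sense of \cite[Chapter 5]{Stability}), and since \eqref{eq:minorize} holds at lag $1$, the chain is aperiodic.

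By hypothesis the chain admits the invariant probability measure $\pi$. A $\varphi$-irreducible Markov chain that possesses an invariant probability measure is positive recurrent, and by \cite[Theorem 10.4.4]{Stability} (using that, by everywhere-positivity of $p$, no point is inessential, so there is no null set to remove) the chain is positive Harris recurrent. Combining positive Harris recurrence with aperiodicity, \cite[Theorem 13.3.3]{Stability} gives that $\pi$ is the unique invariant distribution and, for every $\mathbf{x} \in \mathbb{R}^d$,
$$
\sup_{A \subseteq \mathbb{R}^d}\bigl|\mathbb{P}(\mathbf{X}(t) \in A \mid \mathbf{X}(0) = \mathbf{x}) - \pi(A)\bigr| \to 0, \qquad t \to \infty.
$$
This is precisely the definition of ergodicity, completing the proof.
\end{proof}
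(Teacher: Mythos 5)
Your overall strategy is the same as the paper's: establish Lebesgue-irreducibility, aperiodicity, and Harris recurrence, then invoke the Meyn--Tweedie ergodic theorem. The irreducibility step and the final citation are fine. But there is a genuine gap in your minorization step. You claim that for a closed ball $\mathcal K$ there exist $\delta > 0$ and a positive-measure set $\mathcal K_0 \subseteq \mathcal K$ with $p(\mathbf{x}, \mathbf{y}) \ge \delta$ for \emph{all} $\mathbf{x} \in \mathcal K$ and $\mathbf{y} \in \mathcal K_0$. For a merely measurable, pointwise-positive density this is false: take $p(\mathbf{x},\mathbf{y})$ behaving like $|\mathbf{x}-\mathbf{y}|$ off the diagonal (suitably normalized); then for any candidate $\mathcal K_0$ of positive measure and any $\delta>0$, choosing $\mathbf{x}$ to be a density point of $\mathcal K_0$ forces $p(\mathbf{x},\mathbf{y})<\delta$ on a positive-measure part of $\mathcal K_0$. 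Your proposed fix (``take $\mathcal K_0$ inside a sublevel set on which $p$ is bounded below'') does not repair this, because the set $\{p \ge \delta\}$ lives in the product space and positivity gives no uniformity in $\mathbf{x}$; the lemma as stated assumes only measurability of $p$, and in the application (Theorem 2) no continuity of the innovation density is assumed. Since your aperiodicity argument rests entirely on this uniform one-step minorization, it is not established. The paper avoids the issue: aperiodicity is proved by contradiction from the cyclic decomposition, using that $P(\mathbf{x},\cdot)$ is equivalent to Lebesgue measure so each cyclic class would have to be conull, which is incompatible with $s \ge 2$ disjoint classes covering $\mathbb R^d$. (Alternatively, small sets exist for any $\psi$-irreducible chain by Meyn--Tweedie Theorem 5.2.2, so you never need to construct one by hand; but you would still need a separate aperiodicity argument.)

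A second, smaller issue: in the Harris recurrence step you assert parenthetically that ``no point is inessential, so there is no null set to remove.'' The decomposition theorems you cite only give Harris recurrence off a $\psi$-null set $\mathcal N$, and positivity of $p$ by itself gives only \emph{positive probability} of hitting, not almost-sure hitting, so the upgrade is not automatic. The paper supplies the missing one-line argument: $\mathcal N$ is Lebesgue-null and $P(\mathbf{x},\cdot) \ll \mathrm{Leb}$, so from any starting point the chain lands in $\mathbb R^d \setminus \mathcal N$ after one step almost surely, and from there Harris recurrence applies. You identified this obstacle correctly in your discussion but the proof itself does not carry out the argument. Both gaps are repairable along the lines just indicated, after which your citation chain goes through.
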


\begin{proof} We rely on the monograph \cite{Stability} and the survey \cite{Rosenthal}. 

{\it Step 1.} Let us show that this Markov process is {\it irreducible}, in terms of \cite[page 31]{Rosenthal} or \cite[subsection 3.4.3, section 4.2]{Stability}. With positive probability, any set of positive Lebesgue measure can be hit starting from any initial point, even at the first step, since the integral of this positive density with respect to this set is strictly positive. 

{\it Step 2.} Next, let us show $\pi$ is {\it equivalent} (mutually absolutely continuous) with respect to the Lebesgue measure on $\mathbb R^d$: In other words, let us show these two measures have the same null sets. Indeed, another way to write that $\mathbf{X}(0) \sim \pi$ implies $\mathbf{X}(1) \sim \pi$ is:
$$
\pi(A) = \int_{\mathbb R^d}\int_Ap(\mathbf{x}, \mathbf{y})\,\mathrm{d}\mathbf{y}\,\mathrm{d}\pi(\mathbf{x}).
$$
Equivalence follows from here immediately, using strict positivity of $p$. 

{\it Step 3.} Further, let us show this Markov process is {\it aperiodic} in the sense of \cite[Theorem 5.4.4]{Stability} or \cite[page 32]{Rosenthal}. Assume the converse: $\mathbb R^d$ is split into $s \ge 2$ subsets $D_0, \ldots, D_{s-1}$, and 
$$
\mathbf{X}(t) \in D_j \Rightarrow \mathbf{X}(t+1) \in D_{(j+1)\,\mathrm{mod}\, s}.
$$
From positivity of the transition density $p$, each set $\mathbb R^d\setminus D_j$ has Lebesgue measure zero. This contradicts $D_0\cup \ldots\cup D_{s-1} = \mathbb R^d$. 

{\it Step 4.} Next, let us show this Markov process is {\it Harris recurrent} with respect to the Lebesgue measure on $\mathbb R^d$, see \cite[Proposition 9.1.1]{Stability}. Since we proved irreducibility earlier, we need only to show that, for any set $A \subseteq \mathbb R^d$, starting from any $\mathbf{x} \in \mathbb R^d$, the process will almost surely hit $A$. This follows from positivity of transition density: The process will hit $A$ with positive probability. It remains to apply \cite[Lemma 20]{Rosenthal} to conclude that the process will hit $A$ almost surely, starting from almost every $\mathbf{x} \in \mathbb R^d$.

Finally, let us remove the words {\it almost every} above. Denote by $\mathcal N \subseteq \mathbb R^d$ the set of such exceptional starting points. If we start from any $\mathbf{y} \in \mathcal N$, at the first step we will almost surely get to the set $\mathbb R^d\setminus \mathcal N$. Starting from there, we will hit $A$ almost surely. Therefore, we proved for any initial condition we hit any given set of positive Lebesgue measure. 

Thus we showed that this discrete-time Markov process is irreducible, aperiodic, and Harris recurrent. Apply \cite[Theorem 13.0.1]{Stability} and complete the proof of ergodicity. 
\end{proof}

\begin{lemma}
\label{lemma:ergodic-continuous}
Take a $d$-dimensional diffusion $\mathbf{X} = (\mathbf{X}(t),\, t \ge 0)$: a solution of a stochastic differential equation with $C^{\infty}$ drift vector and $C^{\infty}$ nonsingular diffusion matrix. If this process is tight, then it is ergodic. 
\end{lemma}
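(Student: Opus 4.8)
The plan is to reduce the continuous-time statement to the discrete-time one already proved in Lemma~\ref{lemma:ergodic-discrete}, by sampling the diffusion at integer times. The single nontrivial analytic input is that for each $t > 0$ the transition kernel admits a density $p_t(\mathbf{x}, \mathbf{y})$ that is continuous and strictly positive everywhere; granting this, the $1$-skeleton $\mathbf{Y}(n) := \mathbf{X}(n)$ is a discrete-time Markov process with everywhere-positive transition density $p_1$, and Lemma~\ref{lemma:ergodic-discrete} applies once an invariant measure is produced.

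First I would establish regularity and positivity of $p_t$. Since the diffusion matrix $\sigma(\mathbf{x})$ is nonsingular at every point and all coefficients are $C^{\infty}$, the generator is elliptic with smooth coefficients (uniformly elliptic on each compact set), and the solution is non-explosive --- it is assumed to be a diffusion defined for all $t \ge 0$, and tightness anyway prevents mass from escaping to infinity. Classical parabolic regularity --- or H\"ormander's theorem, which applies trivially here because the operator is genuinely elliptic --- then yields, for every $t > 0$, a density $p_t(\mathbf{x}, \mathbf{y})$ that is jointly $C^{\infty}$; and the Stroock--Varadhan support theorem, using that the associated control system can be steered between any two points when $\sigma$ is everywhere nonsingular, shows $p_t(\mathbf{x}, \mathbf{y}) > 0$ for all $\mathbf{x}, \mathbf{y}$. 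In particular the semigroup is strong Feller, hence weak Feller: $\mathbf{x} \mapsto \int f(\mathbf{y})\,p_t(\mathbf{x}, \mathbf{y})\,\mathrm{d}\mathbf{y}$ is continuous for bounded measurable $f$.

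Next I would run the discrete argument and transfer it back. Fixing $\mathbf{x}_0$, tightness makes the Ces\`aro averages $\pi_N(\cdot) := N^{-1}\sum_{n=0}^{N-1}P_n(\mathbf{x}_0, \cdot)$ a tight family, so along a subsequence they converge weakly to some $\pi$, which by the usual Krylov--Bogolyubov computation (using weak Feller of $P_1$) is invariant for $P_1$. Lemma~\ref{lemma:ergodic-discrete} then gives $\|P_n(\mathbf{x}, \cdot) - \pi\|_{\mathrm{TV}} \to 0$ for every $\mathbf{x}$, and that $\pi$ is the \emph{unique} $P_1$-invariant probability measure. Writing $t = n + s$ with $n$ integer and $s \in [0, 1)$, Chapman--Kolmogorov and $0 \le P_s(\mathbf{y}, A) \le 1$ give
\[
\bigl|P_t(\mathbf{x}, A) - (\pi P_s)(A)\bigr| = \Bigl|\int \bigl(P_n(\mathbf{x}, \mathrm{d}\mathbf{y}) - \pi(\mathrm{d}\mathbf{y})\bigr)\,P_s(\mathbf{y}, A)\Bigr| \le \|P_n(\mathbf{x}, \cdot) - \pi\|_{\mathrm{TV}},
\]
uniformly in $A$ and $s$. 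Since $\pi P_s$ is again $P_1$-invariant, uniqueness forces $\pi P_s = \pi$ for all $s$; hence $\pi$ is invariant for the whole semigroup and $\|P_t(\mathbf{x}, \cdot) - \pi\|_{\mathrm{TV}} \to 0$ as $t \to \infty$ for every $\mathbf{x}$. Any continuous-time invariant measure is in particular $P_1$-invariant, hence equals $\pi$, so $\pi$ is unique; this is ergodicity in the sense of the excerpt. (Alternatively one may bypass the skeleton and cite Doob's theorem --- a strong Feller, irreducible semigroup possessing an invariant measure is ergodic --- or the $T$-process criteria of \cite{MT1993a}.)

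I expect the main obstacle to be the first step: extracting a smooth, strictly positive transition density when the diffusion matrix is only assumed nonsingular \emph{pointwise}, i.e.\ with local rather than uniform ellipticity, which forces a localization argument and relies on the non-explosion built into the hypothesis that $\mathbf{X}$ is a diffusion defined for all $t \ge 0$. Once that density is available, everything else is routine.
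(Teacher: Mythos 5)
Your argument is correct in outline, but it takes a genuinely different route from the paper. The paper never constructs a transition density: it invokes the Meyn--Tweedie continuous-time machinery directly, using the pointwise nonsingularity of the diffusion matrix only to verify the Lie-algebra spanning condition of \cite[Theorem 3.3(ii)]{MT1993a}, which makes the diffusion a $T$-process (strong Feller); tightness gives non-evanescence, hence recurrence via \cite{Bhattacharya}, hence Harris recurrence of the skeletons via \cite[Theorem 6.0.1]{Stability}, and then \cite[Theorems 3.1 and 6.1]{MT1993a} deliver existence of the invariant measure and ergodicity in one stroke. You instead reduce to Lemma~\ref{lemma:ergodic-discrete} for the $1$-skeleton, manufacture the invariant measure by Krylov--Bogolyubov from tightness plus the Feller property, and lift skeleton convergence to continuous time by the Chapman--Kolmogorov contraction $\sup_A|P_t(\mathbf{x},A)-(\pi P_s)(A)|\le \|P_n(\mathbf{x},\cdot)-\pi\|_{\mathrm{TV}}$ together with $\pi P_s=\pi$ (which follows since $\pi P_s$ is $P_1$-invariant and $P_1$-invariance is unique); that lifting step is clean and correct. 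What your route buys is reuse of the discrete lemma and a very transparent final convergence argument; what it costs is the heavier analytic front-end. On that front-end, one caveat: the Stroock--Varadhan support theorem by itself only gives that $P_t(\mathbf{x},\cdot)$ charges every open set, not that the \emph{density} is pointwise positive; to conclude $p_t(\mathbf{x},\mathbf{y})>0$ everywhere you need to combine full support with the parabolic strong maximum principle (Harnack) for the locally uniformly elliptic forward equation, and the smoothness of $p_t$ itself requires a localization argument since ellipticity is only locally uniform. You correctly flag this as the main obstacle, and it is surmountable by classical results, but it is precisely the technical burden the paper's $T$-process argument is designed to avoid.
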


\begin{proof} In \cite[subsection 2.3]{MT1993a}, the discrete-time process $(\mathbf{X}(ns), n = 0, 1, 2, \ldots)$, for any time step $s > 0$, is called a {\it skeleton Markov chain}. In terms of \cite[subsection 3.1]{MT1993a}, tightness is called {\it boundedness in probability} (topological stability condition 2), and is stronger than {\it non-evanescence} (topological stability condition 1): $\mathbf{X}(t) \to \infty$ as $t \to \infty$ with probability zero, for any initial condition. However, in terms of \cite[Section 3]{Bhattacharya}, this implies each point in $\mathbb R^d$ is {\it recurrent}. Applying to \cite[Theorem 3.1(d)]{Bhattacharya}, any open set $U \subseteq \mathbb R^d$ is eventually hit almost surely by any skeleton Markov chain. 

Note that the diffusion matrix has nonzero determinant at any point in $\mathbb R^d$. Therefore, its column vectors span the entire $\mathbb R^d$. We can consider each column vector as a function $\mathbb R^d \to \mathbb R^d$, or, further, as a $C^{\infty}$ vector field on $\mathbb R^d$, see \cite[Chapter 1, Section 3]{Kolar}. Recall that a Lie algebra is also a vector space; see any Lie algebra textbook, for example \cite[Section 3.1]{Hall}. Therefore, the Lie algebra generated by these column vector fields is also equal to $\mathbb R^d$. By \cite[Theorem 3.3(ii)]{MT1993a}, this diffusion is a {\it $T$-process} in the sense of the definition in \cite[subsection 3.2]{MT1993a}. What is more, any skeleton Markov chain is also a {\it $T$-process} in the sense of \cite[Chapter 6]{Stability}. More precisely, it has a {\it strong Feller property}: for any bounded measurable function $f : \mathbb R^d \to \mathbb R$, and any $s > 0$, the function $P^tf$ defined by
$$
P^sf(\mathbf{x}) = \mathbb E(f(\mathbf{X}(s)\mid \mathbf{X}(0) = x)
$$
is bounded and continuous. A weaker version of this property: For bounded and continuous $f$, $P^sf$ is also bounded continuous  is called {\it weak Feller property}. 

Apply \cite[Theorem 6.0.1]{Stability} and conclude that there exists a $\sigma$-finite measure $\psi$ on $\mathbb R^d$ with the following property: Any skeleton chain hits almost surely not only any open set, but any set $A$ with $\psi(A) > 0$. In practice, 
$\psi$ is Lebesgue measure, but this is not important here. Harris recurrence of the original continuous-time process (called {\it stochastic stability condition 1}) in the sense of \cite[subsection 2.2]{MT1993a} follows from here. Apply \cite[Theorem 3.1]{MT1993a}: this process $\mathbf{X}$ has a stationary distribution. Therefore, $\mathbf{X}$ is {\it positive Harris recurrent} (called {\it stochastic stability condition 2}). Finally, apply \cite[Theorem 6.1]{MT1993a} and complete the proof. 
\end{proof}

\begin{lemma}
Take a $d$-dimensional diffusion $\mathbf{X} = (\mathbf{X}(t),\, t \ge 0)$: a solution of a stochastic differential equation with continuous drift vector and continuous nonsingular diffusion matrix. If there exists a function $U : \mathbb R^d \to [0, \infty)$ with $U(\infty) = \infty$ such that, regardless of $\mathbf{X}(0)$, 
\begin{equation}
\label{eq:BDD}
\mathcal C := \sup\limits_{t \ge 0}\mathbb E[U(\mathbf{X}(t))] < \infty,
\end{equation}
then this diffusion is tight, ergodic, with unique stationary distribution $\pi$. For any measurable function $f : \mathbb R^d \to \mathbb R$ such that $|f| \le U$, almost surely, regardless of the initial condition, 
$$
\frac1T\int_0^Tf(\mathbf{X}(t))\,\mathrm{d}t \to \int_{\mathbb R^d}f\mathrm{d}\pi,\quad T \to \infty.
$$
\label{lemma:SLLN}
\end{lemma}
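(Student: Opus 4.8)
The plan is to proceed in three stages: deduce tightness from the time‑uniform moment bound~\eqref{eq:BDD}, upgrade tightness to ergodicity (indeed to positive Harris recurrence) via Lemma~\ref{lemma:ergodic-continuous}, and then invoke the time‑average ergodic theorem for positive Harris recurrent continuous‑time Markov processes. First I would establish tightness. Since $U(\infty) = \infty$, every sublevel set $\{\mathbf{x} \in \mathbb R^d : U(\mathbf{x}) \le M\}$ is bounded, hence contained in a compact set $\mathcal K_M$. By Markov's inequality and~\eqref{eq:BDD}, for every initial condition and every $t \ge 0$ we have $\mathbb P(\mathbf{X}(t) \notin \mathcal K_M) \le \mathbb P(U(\mathbf{X}(t)) > M) \le \mathcal C/M$. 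Given $\eta > 0$, taking $M = \mathcal C/\eta$ yields $\mathbb P(\mathbf{X}(t) \in \mathcal K_M) \ge 1 - \eta$ uniformly in $t$, which is exactly tightness.

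Second, because the diffusion matrix is nonsingular (and, as in every application in this paper, the coefficients are smooth enough), Lemma~\ref{lemma:ergodic-continuous} applies to the tight diffusion $\mathbf{X}$: the process is ergodic with a unique stationary distribution $\pi$. I note for later use that the argument proving Lemma~\ref{lemma:ergodic-continuous} actually establishes the stronger property that $\mathbf{X}$ is positive Harris recurrent, which is precisely what the time‑average ergodic theorem requires (mere distributional or $\pi$‑a.e. ergodicity would not suffice for almost sure convergence from an arbitrary starting point).

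Third, I would check that $f \in L^1(\pi)$. Let $\mu_t$ denote the law of $\mathbf{X}(t)$. For each fixed integer $N \ge 1$ the truncation $U \wedge N$ is bounded and measurable, so ergodicity (convergence in total variation $\mu_t \to \pi$) gives $\int (U \wedge N)\,\mathrm{d}\mu_t \to \int (U \wedge N)\,\mathrm{d}\pi$ as $t \to \infty$; combined with $\int (U \wedge N)\,\mathrm{d}\mu_t \le \mathbb E[U(\mathbf{X}(t))] \le \mathcal C$ this forces $\int (U \wedge N)\,\mathrm{d}\pi \le \mathcal C$. Letting $N \to \infty$ and using monotone convergence, $\int_{\mathbb R^d} U\,\mathrm{d}\pi \le \mathcal C < \infty$, so $|f| \le U$ gives $f \in L^1(\pi)$. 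Since $\mathbf{X}$ is positive Harris recurrent with invariant probability $\pi$ and $f \in L^1(\pi)$, the ergodic theorem for continuous‑time Markov processes (the continuous‑time counterpart of \cite[Theorem 17.0.1]{Stability}; see also \cite{MT1993a}) yields $\tfrac1T\int_0^T f(\mathbf{X}(t))\,\mathrm{d}t \to \int f\,\mathrm{d}\pi$ almost surely, from every initial condition, which is the claim.

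The routine parts are the Markov‑inequality argument for tightness and the truncation argument for the $\pi$‑integrability of $U$. The genuinely delicate point is the passage from Lemma~\ref{lemma:ergodic-continuous} to the almost sure time‑average limit: one must make sure the hypotheses deliver \emph{positive Harris recurrence}, so that the limit holds for \emph{every} starting point rather than only $\pi$‑almost everywhere. This is exactly where the nonsingularity of the diffusion matrix (hence the strong Feller / $T$‑process structure exploited in Lemma~\ref{lemma:ergodic-continuous}) is indispensable, and it is the step I would write out most carefully.
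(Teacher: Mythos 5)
Your proposal is correct and follows essentially the same route as the paper's proof: tightness from Markov's inequality and the compact sublevel sets of $U$, ergodicity via Lemma~\ref{lemma:ergodic-continuous} (with the Harris recurrence/$T$-process structure doing the real work), $\int U\,\mathrm{d}\pi \le \mathcal C$ by a limiting argument, and then the Meyn--Tweedie ergodic theorem for the time averages. The only cosmetic difference is that you obtain the $\pi$-integrability of $U$ by truncating at level $N$ and using total-variation convergence plus monotone convergence, whereas the paper cites a weak-convergence proposition; your variant is if anything slightly more robust since it needs no regularity of $U$.
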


\begin{proof} {\it Step 1.} We use \cite[Section 8]{MT1993a}. It does statements for general reducible Markov processes. Here, the diffusion process is irreducible. Therefore, there is only one $\tilde{H}_i$: $|I| = 1$. Thus, the {\it random probability} is equal to the stationary distribution: $\tilde{\pi} = \pi$; and the {\it invariant transition function} is also equal to $\Pi = \pi$. The $L^1$ space in \cite[Theorem 8.1(i)]{MT1993a} is thus equal to $L^1(\mathbb R^d, \pi)$. 

{\it Step 2.} The process is tight (bounded in probability on average). This follows from~\eqref{eq:BDD}. Indeed, pick a $\varepsilon > 0$. Since $U(\infty) = \infty$, the set $\mathcal K := \{\mathbf{x} \in \mathbb R^d\mid U(\mathbf{x}) \le C/\varepsilon\}$ is compact. Use the Markov inequality \cite[Proposition 5.1.1]{Look}: For any $t \ge 0$, $\mathbb P(\mathbf{X} \in \mathcal K) \ge 1 - \varepsilon$. 

{\it Step 3.} We need also to show that 
\begin{equation}
\label{eq:integrable}
\int_{\mathbb R^d}U(\mathbf{x})\pi(\mathrm{d}\mathbf{x}) \le \mathcal C.
\end{equation}
Apply Lemma~\ref{lemma:ergodic-continuous} to conclude that the process is ergodic. Therefore, $\mathbf{X}(n) \to \pi$ as $n \to \infty$: The skeleton chain converges to the stationary measure in total variation, and therefore weakly. Apply \cite[Proposition 10.2.2]{Look} and use~\eqref{eq:BDD} to conclude~\eqref{eq:integrable}. 

{\it Step 4.} From the proof of Lemma~\ref{lemma:ergodic-continuous}, this diffusion is a $T$-process. We showed in Step 2 that this process is bounded in probability on average. For any $f : \mathbb R^d \to \mathbb R$ such that $|f| \le U$, we have: $f \in L^1(\mathbb R^d, \pi)$. Use results of Step 1 and \cite[Theorem 8.1]{MT1993a}.
\end{proof}

\begin{lemma}
Take a vector-valued square-integrable martingale $\mathcal M$ in $\mathbb R^d$ such that its $i$th component has $\mathrm{d}\langle \mathcal M_i\rangle_t = \mathcal G_{i}(t)\,\mathrm{d}t$, and a $d\times d$-matrix-valued bounded deterministic function $\mathcal F : \mathbb R \to \mathbb R^{d\times d}$. Define the vector-valued stochastic integral componentwise:
\begin{align}
\label{eq:i-th}
\begin{split}
\mathcal F\cdot\mathcal M(t) &:= \int_0^t\mathcal F(s)\,\mathrm{d}\mathcal M(s),\quad t \ge 0,\\
(\mathcal F\cdot\mathcal M)_i(t) &:= \sum_{j=1}^d\int_0^t\mathcal F_{ij}(s)\mathrm{d}\mathcal M_j(s),\quad i = 1, \ldots, d,\quad t \ge 0.
\end{split}
\end{align}
Then $\mathcal F\cdot\mathcal M = (\mathcal F\cdot\mathcal M(t),\, t \ge 0)$ is also a square-integrable martingale, and 
$$
\mathbb E|F\cdot\mathcal N(t)|_2^2 \le d^{3/2}\int_0^t|\!|F(s)|\!|_S^2\cdot\sum \mathcal G_i(s)\,\mathrm{d}s.
$$
\label{lemma:matrix-mgle}
\end{lemma}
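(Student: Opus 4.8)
The plan is to prove the two claims of Lemma~\ref{lemma:matrix-mgle} in sequence: first that $\mathcal F\cdot\mathcal M$ is a square-integrable martingale, then the $L^2$-bound. For the first part I would work componentwise. Each component $(\mathcal F\cdot\mathcal M)_i = \sum_j \int_0^\cdot \mathcal F_{ij}(s)\,\mathrm{d}\mathcal M_j(s)$ is a finite sum of It\^o integrals of deterministic integrands against square-integrable martingales. Since $\mathcal F$ is bounded, say $|\!|\mathcal F(s)|\!|_S \le K$ for all $s$, and since $\mathbb E\langle\mathcal M_j\rangle_t = \int_0^t \mathbb E[\mathcal G_j(s)]\,\mathrm{d}s < \infty$ (this finiteness comes from square-integrability of $\mathcal M_j$), the It\^o isometry gives $\mathbb E[(\int_0^t \mathcal F_{ij}\,\mathrm{d}\mathcal M_j)^2] = \mathbb E\int_0^t \mathcal F_{ij}^2(s)\,\mathrm{d}\langle\mathcal M_j\rangle_s \le K^2\,\mathbb E\langle\mathcal M_j\rangle_t < \infty$. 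Hence each stochastic integral is a genuine (not merely local) square-integrable martingale, and a finite linear combination of such is again one; so $\mathcal F\cdot\mathcal M$ has square-integrable martingale components, which is the assertion.

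For the bound, I would compute $\mathbb E|\mathcal F\cdot\mathcal M(t)|_2^2 = \sum_{i=1}^d \mathbb E[(\mathcal F\cdot\mathcal M)_i(t)^2]$ and expand each term using the It\^o isometry for a sum of integrals against the $\mathcal M_j$. Writing $\mathrm{d}\langle \mathcal M_j, \mathcal M_k\rangle_s$ for the cross-variations, one gets $\mathbb E[(\mathcal F\cdot\mathcal M)_i(t)^2] = \sum_{j,k}\mathbb E\int_0^t \mathcal F_{ij}(s)\mathcal F_{ik}(s)\,\mathrm{d}\langle\mathcal M_j,\mathcal M_k\rangle_s$. By Kunita–Watanabe (or just Cauchy–Schwarz on the bilinear form $\mathrm{d}\langle\cdot,\cdot\rangle$), $|\mathrm{d}\langle\mathcal M_j,\mathcal M_k\rangle_s| \le (\mathrm{d}\langle\mathcal M_j\rangle_s)^{1/2}(\mathrm{d}\langle\mathcal M_k\rangle_s)^{1/2} = (\mathcal G_j(s)\mathcal G_k(s))^{1/2}\,\mathrm{d}s \le \tfrac12(\mathcal G_j(s)+\mathcal G_k(s))\,\mathrm{d}s$. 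Then I would bound $|\mathcal F_{ij}(s)\mathcal F_{ik}(s)|$ by $|\!|\mathcal F(s)|\!|_S^2$ (each entry of a matrix is dominated by its spectral norm), sum over $i,j,k$, and collect the $\mathcal G_i$ terms. The combinatorial bookkeeping of the triple sum over $i,j,k$ against $d$ terms of $\sum_i\mathcal G_i$ is what produces the constant; tracking it carefully should yield a factor of order $d^{3/2}$ (or better), which is all that is needed.

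The main obstacle is the constant-chasing in the triple sum and making sure the appeal to Kunita–Watanabe for the cross-variation terms is justified in this generality; once $\mathcal M$ is known to be square-integrable these cross-variations are well-defined finite-variation processes and the inequality applies, so this is more a matter of care than of genuine difficulty. A secondary point worth stating explicitly is that the hypothesis $\mathrm{d}\langle\mathcal M_i\rangle_t = \mathcal G_i(t)\,\mathrm{d}t$ with $\mathcal M_i$ square-integrable forces $\int_0^t\mathbb E[\mathcal G_i(s)]\,\mathrm{d}s<\infty$, which is what legitimizes the It\^o isometry computations above; I would note this once at the start. Everything else — linearity of the stochastic integral, the It\^o isometry, dominating matrix entries by the spectral norm — is standard and I would cite it rather than reprove it.

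\medskip

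\noindent\emph{Remark.} The lemma as used in Step 5 of the proof of Theorem~\ref{thm:cont-time} is applied with $\mathcal F(s)=e^{\mathbf B(s-t)}$, which is bounded on $[0,t]$ by~\eqref{eq:exp-ineq}, and $\mathcal G_i(s)=Q_i(V(s))$; the resulting bound $d^{3/2}\int_0^t|\!|e^{\mathbf B(s-t)}|\!|_S^2\sum_i Q_i(V(s))\,\mathrm{d}s$ together with $Q_i(v)\le c_i(v^2+1)$ and~\eqref{eq:bdd-moment} gives the uniform-in-$t$ estimate~\eqref{eq:A3}, so the statement is exactly what is needed downstream.
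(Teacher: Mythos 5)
Your route is sound but genuinely different from the paper's. The paper works at the level of matrices: it observes that the cross-variation matrix $(\langle\mathcal M_j,\mathcal M_{j'}\rangle)_{jj'}$ is symmetric nonnegative definite, bounds the resulting integrand $\mathrm{tr}(\mathcal F\mathcal G\mathcal F^T)$ by $\|\mathcal F\|_F^2\cdot\|\mathcal G\|_F$ via the Frobenius inner product and the Cauchy inequality~\eqref{eq:Cauchy}, and then converts Frobenius norms to spectral norms and traces using~\eqref{eq:norms} and~\eqref{eq:positive-F}; the factor $d^{3/2}$ is exactly $d$ (from $\|\mathcal F\|_F^2\le d\|\mathcal F\|_S^2$) times $\sqrt d$ (from $\|\mathcal G\|_F\le\sqrt d\,\mathrm{tr}\,\mathcal G$). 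You instead expand entrywise and control the cross terms by Kunita--Watanabe plus AM--GM, which is more elementary and avoids the Frobenius machinery entirely. Your first paragraph (square-integrability of $\mathcal F\cdot\mathcal M$ via the It\^o isometry and boundedness of $\mathcal F$) is fine and is in fact more explicit than the paper, which leaves that point implicit.

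However, the one step you defer --- ``tracking it carefully should yield a factor of order $d^{3/2}$'' --- is precisely where the work is, and the estimate you sketch does not deliver $d^{3/2}$. If you dominate every product $|\mathcal F_{ij}(s)\mathcal F_{ik}(s)|$ by $\|\mathcal F(s)\|_S^2$ and then sum $\tfrac12(\mathcal G_j+\mathcal G_k)$ over the triple index $(i,j,k)$, you get $d^2\|\mathcal F\|_S^2\sum_j\mathcal G_j$: the sum over $(j,k)$ of $\tfrac12(\mathcal G_j+\mathcal G_k)$ equals $d\sum_j\mathcal G_j$, and the outer sum over $i$ contributes another factor $d$. To recover $d^{3/2}$ within your framework you must keep the row structure of $\mathcal F$: for fixed $i$ set $f_j:=|\mathcal F_{ij}(s)|$ and use the $j\leftrightarrow k$ symmetry to write $\sum_{j,k}f_jf_k\tfrac12(\mathcal G_j+\mathcal G_k)=\bigl(\sum_kf_k\bigr)\bigl(\sum_jf_j\mathcal G_j\bigr)\le\sqrt d\,|f|_2\cdot\bigl(\max_jf_j\bigr)\cdot\sum_j\mathcal G_j\le\sqrt d\,\|\mathcal F(s)\|_S^2\sum_j\mathcal G_j(s)$, since both the row norm $|f|_2$ and $\max_jf_j$ are at most $\|\mathcal F(s)\|_S$; summing over $i$ then gives the stated $d^{3/2}$. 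With that inserted your argument is complete. For the downstream use in Step~5 of the proof of Theorem~\ref{thm:cont-time} the difference between $d^2$ and $d^{3/2}$ is immaterial, but as a proof of the lemma as stated the constant has to be earned rather than asserted.
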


\begin{proof} The quadratic variation of the $i$th component of this stochastic integral from~\eqref{eq:i-th} is
\begin{equation}
\label{eq:q-v}
\langle (\mathcal F\cdot\mathcal M)_i\rangle_t = \int_0^t\sum\limits_{j=1}^d\sum\limits_{j'=1}^d\mathcal F_{ij}(s)\mathcal F_{ij'}(s)\,\mathrm{d}\langle \mathcal M_i, \mathcal M_j\rangle_s.
\end{equation}
The matrix is symmetric and nonnegative definite for any $0 \le s \le t$:
$$
\mathcal A(s, t) := \left(\langle \mathcal M_j, \mathcal M_{j'}\rangle_t - \langle \mathcal M_j, \mathcal M_{j'}\rangle_s\right)_{jj'}.
$$  
This follows from the observation that for any constant vector $\mathbf{v} \in \mathbb R^d$, we have: 
$$
\mathcal A(s, t)\mathbf{v}\cdot\mathbf{v} = \langle \mathcal M\cdot \mathbf{v}\rangle_t - \langle \mathcal M\cdot \mathbf{v}\rangle_s \ge 0.
$$
Using~\eqref{eq:positive-F}, we conclude that the Frobenius norm of this matrix is dominated by its trace times a constant $\sqrt{d}$: 
$$
d^{1/2}\sum\limits_{j=1}^d\left(\langle \mathcal M_j\rangle_t - \langle \mathcal M_j\rangle_t\right) = d^{1/2}\sum\limits_{j=1}^d\int_s^t\mathcal G_j(u)\,\mathrm{d}u.
$$
Approximate the integral~\eqref{eq:q-v} using Riemann-Stieltjes partial sums. Take expectation and sum over all $i = 1, \ldots, d$. Apply the norm $|\cdot|_2$ and the triangle inequality. Get back from the sums to the integral by letting the mesh tend to zero. 
\begin{equation}
\label{eq:upper-bound}
\mathbb E|(\mathcal F\cdot\mathcal N)(t)|^2_2 \le \mathbb E\int_0^t\mathrm{tr}(\mathcal F(s)\mathcal G(s)\mathcal F^T(s))\,\mathrm{d}s.
\end{equation}
Use the Cauchy inequality twice in~\eqref{eq:Cauchy} to get:
\begin{equation}
\label{eq:C2}
|\mathrm{tr}(\mathcal F(s)\mathcal G(s)\mathcal F^T(s))| \le |\!|\mathcal F(s)|\!|_F^2\cdot |\!|\mathcal G(s)|\!|_F. 
\end{equation}
From~\eqref{eq:norms},~\eqref{eq:upper-bound},~\eqref{eq:C2}, we complete the proof. 
\end{proof}

\end{document}